\documentclass[11pt,twoside]{article}

\usepackage{geometry}                
\geometry{letterpaper}                   
\usepackage{graphicx}
\usepackage{amsmath,amssymb, amsthm,epsfig,bm}
\usepackage{setspace}
\usepackage{epstopdf}
\usepackage{psfrag}
\usepackage{color,soul}
\usepackage{verbatim}
\usepackage{algorithm,algorithmic}
\usepackage{multirow}
\usepackage{natbib}
\usepackage{xr}
\usepackage{mathrsfs}
\usepackage{cases}
\usepackage{enumitem}



\setlist{noitemsep, topsep=0pt}
\bibpunct{(}{)}{;}{a}{}{,} 
\DeclareGraphicsRule{.tif}{png}{.png}{`convert #1 `dirname #1`/`basename #1 .tif`.png}

\setlength{\oddsidemargin}{0.1 in}
\setlength{\evensidemargin}{0.1 in}
\setlength{\topmargin}{-0.2in}
\setlength{\textheight}{8.7in}
\setlength{\textwidth}{6.3in}
\setlength{\parindent}{0.3 in}
\setlength{\headsep}{0.15in}

\usepackage[english]{babel}
\usepackage[utf8]{inputenc}
\usepackage{fancyhdr}
 
\pagestyle{fancy}
\fancyhf{}


\fancyhead[CE]{\authorshort}
\fancyhead[CO]{\titleshort}
\fancyfoot[CE,CO]{\thepage}




\newcommand{\bfI}{\mathbf{I}}

\newcommand{\calA}{\mathcal{A}}

\newcommand{\calD}{\mathcal{D}}
\newcommand{\calI}{\mathcal{I}}

\newcommand{\calN}{\mathcal{N}}

\newcommand{\calT}{\mathcal{T}}
\newcommand{\calU}{\mathcal{U}}
\newcommand{\calV}{\mathcal{V}}

\newcommand{\R}{\mathbb{R}}
\newcommand{\N}{\mathbb{N}}

\newcommand{\Prob}{\mathbb{P}}

\newcommand{\alp}{\alpha}

\newcommand{\hbeta}{\hat{\beta}}

\newcommand{\veps}{\varepsilon}

\newcommand{\hmu}{\hat{\mu}}
\newcommand{\tdmu}{\tilde{\mu}}

\newcommand{\Omg}{\Omega}

\newcommand{\calE}{\mathcal{E}}

\newcommand{\defi}{\mathop{:=}} 
  
\newcommand{\dnorm}{\mathcal{N}}

\newcommand{\trans}{\mathsf{T}}

\newcommand{\wh}{\widehat}


\DeclareMathOperator*{\argmin}{argmin}

\DeclareMathOperator{\diag}{diag}
\DeclareMathOperator{\sgn}{sgn}
\DeclareMathOperator{\row}{row}

\DeclareMathOperator{\nul}{null}
\DeclareMathOperator{\rank}{rank}
\DeclareMathOperator{\supp}{supp}

\RequirePackage{tikz}

\newcommand{\titleshort}{Post-selection inference by estimator augmentation}
\newcommand{\authorshort}{Min and Zhou}

\newcommand{\proglang}{\textsf}

\newcommand{\compresslist}{
 \setlength{\itemsep}{5pt}
 \setlength{\parskip}{0pt}
 \setlength{\parsep}{0pt}
}

\usepackage[colorlinks=true,citecolor=blue,linkcolor=blue]{hyperref}
\usepackage{multirow}

\numberwithin{equation}{section}
\newtheorem{theorem}{Theorem}

\newtheorem{proposition}[theorem]{Proposition}
\newtheorem{corollary}[theorem]{Corollary}

\theoremstyle{definition}

\newtheorem{remark}{Remark}

\newtheorem{routine}{Algorithm}

\onehalfspacing
\begin{document}

\title{Constructing Confidence Sets After Lasso Selection by Randomized Estimator Augmentation}
\author{Seunghyun Min$^*$ and Qing Zhou\thanks{UCLA Department of Statistics. Email: seunghyun@ucla.edu; zhou@stat.ucla.edu}}
\date{}
\maketitle

\begin{abstract}
Although a few methods have been developed recently for building confidence intervals after model selection, how to construct confidence sets for joint post-selection inference is still an open question. In this paper,
we develop a new method to construct confidence sets after lasso variable selection, with strong numerical support for its accuracy and effectiveness. A key component of our method is to sample from the conditional distribution of the response $y$ given the lasso active set, which, in general, is very challenging due to the tiny probability of the conditioning event. We overcome this technical difficulty by using estimator augmentation to simulate from this conditional distribution via Markov chain Monte Carlo given any estimate $\tdmu$ of the mean $\mu_0$ of $y$. We then incorporate a randomization step for the estimate $\tdmu$ in our sampling procedure, which may be interpreted as simulating from a posterior predictive distribution by averaging over the uncertainty in $\mu_0$. Our Monte Carlo samples offer great flexibility in the construction of confidence sets for multiple parameters. 
Extensive numerical results show that our method is able to construct confidence sets with the desired coverage rate and, moreover, that the diameter and volume of our confidence sets are substantially smaller in comparison with a state-of-the-art method.

{\em Keywords:} Confidence set, estimator augmentation, high-dimensional data, lasso, Markov chain Monte Carlo, post-selection inference.
\end{abstract}

\section{Introduction}
\label{sec:intro}

Assuming that a random vector $y \in \R^n$ follows a multivariate Gaussian distribution,
\begin{align}\label{eq:ydistn}
y = \mu_0 + \varepsilon, \qquad \varepsilon \sim \calN_n(0, \sigma^2 \bfI_n),
\end{align}
we wish to make inference on the unknown mean vector $\mu_0\in\R^n$ after observing $y$. Given a set of $p$ covariates $X = \big[X_1 | \cdots | X_p\big] \in \R^{n \times p}$, a common approach is to approximate $\mu_0$ with a linear combination $X\beta$ for $\beta\in\R^p$. 
When the number of covariates is large, we often face the situation that only some of them can be included in the linear approximation. They may be selected manually or by a certain model selection method. Let $A \subset \{1, \cdots p\}$ be the set of selected covariates. After the selection step is done, our goal shifts to constructing a linear model that can best approximate $\mu_0$ with only the selected covariates $X_A=(X_j,j\in A)$.
Then the parameter of interest 
\begin{equation}\label{eq:nu}
\nu := X_A^+\mu_0 = \argmin_{\beta\in\R^{|A|}} \| \mu_0 - X_A \beta \|^2_2
\end{equation}
is defined by the projection of $\mu_0$ onto the space spanned by $X_A$. Inference on $\nu$ is called post-selection inference \citep{pots:91}. In large-scale analysis, model selection is usually applied as an initial screening for important variables or features. In these applications, naive methods based on the standard $t$-statistic or interval will not provide valid inference for the selected variables due to selection bias in the screening step \citep{tibs:etal:16,liu:etal:18}. By conditioning on the model selection event, post-selection inference provides reliable quantification of the significance of a selected variable, which is critical for follow-up investigations. Another appealing feature for post-selection inference is that it is valid without assuming a true linear model for $y$, only regarding the selected model as an approximation for $\mu_0$ \citep{berk:etal:13}, which greatly relaxes its model assumptions. 

When the selection step is done independently from $y$, for example by using another independent dataset or by pre-given information, inference on $\nu$ can be easily done with conventional methods. The distribution of the least-squares estimator $\hat \nu = X_A^+ y$ simply follows a Gaussian distribution. However, the problem becomes much more challenging when the selection step is data-driven and uses the same $y$. In such a case, conditioning on the selected active set $A$, the sampling distribution of $y$ is restricted to a potentially irregular subset of $\R^n$. This problem is further complicated for high-dimensional data with $p>n$. Several lines of recent work have laid down the theoretical foundations and developed novel methods for post-selection inference on high-dimensional data. \cite{tibs:etal:16} develop a truncated Gaussian statistic to test the significance of an entering variable in each step of a sequential regression method, which generalizes the earlier work by \cite{lock:etal:14}. \cite{tibs:etal:18} establish uniform convergence  properties of this statistic, without normal assumption, as $n\to\infty$ and $p$ stays fixed. \cite{lee:etal:16} build exact confidence intervals for individual components $\nu_j$ of $\nu$ in \eqref{eq:nu}, where the set $A$ is the support of the lasso \citep{tibs:96}, which we will call Lee's method hereafter. The authors show that conditioning on the active set of the lasso is equivalent to imposing polyhedral constraints on $y$, a key idea used in \cite{tibs:etal:16} as well. \cite{tian:tayl:17} have established asymptotic results for Lee's method without imposing Gaussian assumption. 
\cite{tayl:tibs:18} further generalize Lee's method to generalized linear models, Cox's proportional hazards model and Gaussian graphical models. By conditioning on a smaller and more robust subset of the lasso active set, \cite{liu:etal:18} develop a more efficient method that produces shorter intervals. The randomize inference in \cite{tiantaylor18} improves the numerical stability of Lee's method with substantial power gain. \cite{Bachoc20} suggest general methods to construct asymptotically valid confidence intervals post model selection.

In this article, we seek to make inference on $\nu$ \eqref{eq:nu} with the model selected by the lasso. That is, the set $A$ is the support of 
\begin{align}\label{eq:lassodef}
\hbeta(y):=\argmin_{\beta\in\R^p}\frac{1}{2n}\|y-X\beta\|_2^2 + \lambda\sum_{i=1}^pw_i|\beta_i|,
\end{align}
where $w_i>0$ and are set to 1 by default. However, in contrast to \cite{lee:etal:16} and the other methods reviewed above, we aim at constructing not only confidence intervals for an individual $\nu_j$, but also confidence sets for $\nu_B$, where $B$ contains an arbitrary subset of $A$. To the best of our knowledge, methods for constructing confidence sets after model selection have not been proposed in the literature. Although one might consider simultaneously covering all $\nu_j$, $j\in B$ by controlling family-wise error rate, such an approach would be very stringent for large $B$, as verified numerically in comparison to our proposed method. On the other hand, the method of \cite{lee:etal:16} critically relies on the cumulative distribution function of a \emph{univariate} Gaussian distribution truncated to the union of $2^{|A|}$ intervals. It seems highly intractable to generalize their technique for joint inference on a potentially large set of $\nu_j$. 
Moreover, although Lee's method preserves the coverage rate at a desirable level, their confidence intervals are not always informative. In particular, their method sometimes produces infinite intervals with $\infty$ or $-\infty$ as the upper or lower bound, severely limiting its practical applications. \cite{kiva:leeb:18} show that the expected interval length of Lee's method can be infinity under certain condition, which is frequently satisfied in their simulation study.

A key difference between our method and the existing ones is that ours is built upon sampling of $y^*$ that leads to the same active set of the lasso, i.e. $[y^*\mid \supp(\hbeta(y^*))=A]$, where $A=\supp(\hbeta(y))$ is computed from the observed data $(X,y)$.  This sampling-based approach allows for the construction of confidence sets for joint inference on any subset of the parameter vector $\nu$. It also offers great flexibility in choosing the statistic for inference, as the distribution of  any function $T(y^*)$ can be readily approximated from a large sample of $y^*$. However, this conditional sampling is a challenging computational problem, since the event $\{\supp(\hbeta(y^*))=A\}$ is in general a rare event, especially when $p$ is large. To complete this difficult task, we develop a novel conditional sampler via the method of estimator augmentation \citep{zhou:14}, given a point estimate $\tdmu$ of $\mu_0$.
To protect our method from a poor estimate $\tdmu$, we introduce a randomization step to draw a uniform sample of $\tdmu$ from a set $\wh C$, which in conjunction with our conditional sampling of $y^*$ produces an efficient and accurate tool for joint inference after lasso selection. The set $\wh C$ can be seen as a way to incorporate the uncertainty in estimating $\mu_0$ from $y$, prior to or unconditional on model selection, which allows for an adaptive and robust approximation of the distribution $[y^*\mid \supp(\hbeta(y^*))=A]$. When used for inference on individual parameters, our method often builds much shorter confidence intervals than Lee's method, while achieving a comparable coverage rate. Furthermore, our method, by design, does not produce infinite intervals or sets. Our post-selection inference method has been implemented in the \proglang{R} package \textbf{EAinference}, which includes many other applications of estimator augmentation and related simulation-based inference tools.

The rest of the paper is organized as follows.
In Section~\ref{sec:psi}, we introduce the key ingredients of our method: how to build confidence sets via conditional sampling and how to implement the randomization step. Section~\ref{sec:condsamp} develops a Markov chain Monte Carlo (MCMC) algorithm for the conditional sampling. Section~\ref{sec:numrslt} demonstrates empirically the effectiveness and accuracy of the confidence sets constructed by our method, including comparisons with Lee's method. We conclude the paper with some remarks and discussion in Section~\ref{sec:conc}. Proofs of technical results are provided in Section~\ref{sec:proof}. 

Notation used throughout the paper is defined here.
Let $\N_k$ denote the set $\{1, \cdots, k\}$. Let ${\bf 1}_{[k]}$ be a $k$-vector of ones.
Denote by $Z_i$ the $i$-th column or the $i$-th component of $Z$ when $Z$ is a matrix or a vector, respectively. 
Correspondingly, we define $Z_A:=(Z_i)_{i \in A}$ and $Z_{-i}:=(Z_j)_{j\ne i}$. For a matrix $Z$, let $Z_{AB}$ be the submatrix consisting rows in $A$ and columns in $B$.
The superscript $+$ is used for Moore-Penrose inverse. 
Denote by $\row(X)$ and null$(X)$ the row space and the null space of a matrix $X$, respectively.

\section{Post-selection inference}\label{sec:psi}
\subsection{Basic idea}\label{sec:basic}
For the lasso estimate $\hbeta(y)$, let $\calA(y)=\text{supp}(\hbeta(y))$ be the set of active variables.
Given the active set $\calA(y) = A$,
the parameter of interest $\nu=X_A^+ \mu_0$ \eqref{eq:nu} is the coefficient vector for the projection of $\mu_0 = \mathbb{E}[y]$ onto span$(X_A)$. 
Our goal is to construct a confidence set $\wh I_B(\alpha)$ such that
\begin{equation}\label{eq:hatI}
\mathbb{P}\Big\{\nu_B \in \wh I_B(\alpha) \Big| \calA(y) = A\Big\} \geq 1-\alpha \text{\quad for } B \subset \N_{|A|},
\end{equation}
where the probability is taken with respect to $y\sim\dnorm_n(\mu_0,\sigma^2\bfI_n)$.
In particular, when $B = \{j\}$, $\wh I_B(\alpha)$ is a confidence interval for $\nu_j$, which we denote by $\wh I_j(\alpha)$. A natural choice for the center of the confidence set is $\hat \nu_B := [X_A^+y]_B$.
This problem is, however, more complicated than it may look. Since we have selected variables using lasso, the distribution of 
$X_A^+ y$ given $\calA(y) = A$ is no longer a Gaussian distribution, as the support of $y$ is now only a proper subset of $\mathbb{R}^n$. 

We will develop a simulation-based approach. Note that the conditioning event $\{\calA(y)=A\}$ restricts
our sampling to those $y$ for which the lasso $\hbeta(y)$ selects exactly the same variables in $A$, which is usually a rare event. Thus, it is almost impossible to use bootstrap to draw from $[y^* \mid \calA(y^*)=A]$,
where $y^*$ denotes a sample drawn from an (estimated) distribution of $y$.
However, estimator augmentation \citep{zhou:14} enables us to simulate from this conditional distribution, with a point estimate $\tdmu$ for $\mu_0$, by an MCMC algorithm; see Section~\ref{sec:MHLS} for the details.

Suppose we have drawn a large sample of $y^*$ by this Monte Carlo algorithm.
One could use
 $[X_A^+(y^{*} - \tdmu)  \mid \calA(y^*)=A]$, which can be easily estimated from the samples of $y^*$,
 to approximate $[X_A^+(y - \mu_0) \mid \calA(y)=A]$ and build a confidence set for $\nu_B$. However, due to the dependency of these distributions on $\tdmu$ and $\mu_0$, the former is in general not a uniformly consistent estimator for the latter; see \cite{leeb:pots:06} for related discussion on estimating the conditional distribution $[\hbeta(y)\mid \calA(y)=A]$. In practice, this means that a poor choice of $\tdmu$ often results in poor coverage. To overcome this difficulty, we develop a robust method which randomizes the plug-in estimate $\tdmu$. As it will become clear, 
our approach is to bound the relevant quantiles of $[X_A^+(y - \mu_0) \mid \calA(y)=A]$ in order to perform conservative inference as stated in \eqref{eq:hatI}.

\subsection{The randomization step}\label{sec:rand}
We will first develop our method for constructing confidence intervals for $\nu_j$, which will be generalized in Section \ref{sec:JointInference} to joint inference on $\nu_B$.
Let $q_{j,\gamma}(\mu)$ be the $\gamma$-quantile of the distribution
\begin{align} \label{eq:CImu}
[\{X_A^+(y^*-\mu)\}_j\mid \calA(y^*)=A], 
\end{align}
where $y^*=\mu+\veps^*$ and $\veps^*\sim\dnorm_n(0,\sigma^2\bfI_n)$. For $\gamma\in(0,1)$, construct an interval
\begin{equation}\label{eq:CI}
\xi_j(\mu,\gamma) := \Big[ \hat \nu_{j}  - q_{j,1-\gamma/2}(\mu), \hat \nu_{j}  - q_{j,\gamma/2}(\mu) \Big]. 
\nonumber
\end{equation}
By definition, the coverage rate of $\xi_j(\mu_0,\gamma)$ is $1-\gamma$. Call $\xi_j(\mu_0, \gamma)$ the oracle interval.  Of course, $\mu_0$ is unknown so we need an estimate $\tdmu$ in place of $\mu_0$ to construct a practical interval $\xi_j(\tdmu, \gamma)$. 
One problem is that the conditional distribution in \eqref{eq:CImu} depends on $\mu$ due to the selection event and  $q_{j,\gamma}(\tdmu)$ is not guaranteed to converge uniformly to $q_{j,\gamma}(\mu_0)$.
To alleviate this issue, we propose a method to randomize the point estimate $\tdmu$, which is motivated by the following conservative construction.

Suppose we have a set $C \subset \R^n$ such that $\mu_0 \in C$. 
For $\gamma<1/2$, define
\begin{align}
q_{j,1-\gamma}^*(C)&=\max_{\mu\in C}q_{j,1-\gamma}(\mu), \label{eq:defupmax}\\
q_{j,\gamma}^*(C)&=\min_{\mu\in C}q_{j,\gamma}(\mu). \label{eq:deflowmax}
\end{align}
Then it follows that the coverage rate
of the interval $[\hat \nu_j-q_{j,1-\gamma/2}^*(C),\hat \nu_j-q_{j,\gamma/2}^*(C)]$ is at least $1-\gamma$.
A possible choice for the set $C$ is a confidence set $\wh C$ for the mean $\mu_0$, unconditional on the selected model. We have the following result about using such an interval for a conservative coverage. Recall $\nu=X_A^+\mu_{0}$ and $\hat{\nu}=X_A^+ y$.

\begin{proposition}\label{lm:converage}
Suppose $y\sim\dnorm_n(\mu_0,\sigma^2\bfI_n)$ and $\wh C$ is a $1-\alpha/2$ confidence set for $\mu_0$, independent of $y$.
Let $\xi_j^*(\wh C) = [\hat \nu_j-q_{j,1-\alp/4}^*(\wh C),\hat \nu_j-q_{j,\alp/4}^*(\wh C)]$.
Then we have
\begin{align} \label{eq:conservativepr}
\Prob\left\{\nu_j\in  \xi_j^*(\wh C) \bigg| \calA(y) = A \right\}\geq 1-\alp. 
\end{align}
\end{proposition}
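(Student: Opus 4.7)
The plan is to prove the claim by a Bonferroni-type decomposition over the event $E:=\{\mu_0\in\wh C\}$. The two ingredients that make this work are (i) the conservative construction of $\xi_j^*(\wh C)$ via $q_{j,\cdot}^*(\wh C)$ from \eqref{eq:defupmax}--\eqref{eq:deflowmax}, which guarantees that on $E$ the interval $\xi_j^*(\wh C)$ contains the oracle interval at level $1-\alpha/2$; and (ii) the independence of $\wh C$ from $y$, which lets us decouple the $\wh C$-randomness from the conditioning on $\calA(y)=A$.

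First, I would record a deterministic set inclusion: on the event $E$, since $\mu_0\in\wh C$, the definitions in \eqref{eq:defupmax}--\eqref{eq:deflowmax} give $q_{j,1-\alpha/4}^*(\wh C)\geq q_{j,1-\alpha/4}(\mu_0)$ and $q_{j,\alpha/4}^*(\wh C)\leq q_{j,\alpha/4}(\mu_0)$. Hence
\[
\xi_j(\mu_0,\alpha/2)\ =\ \bigl[\hat\nu_j-q_{j,1-\alpha/4}(\mu_0),\ \hat\nu_j-q_{j,\alpha/4}(\mu_0)\bigr]\ \subseteq\ \xi_j^*(\wh C)\quad\text{on }E.
\]
Consequently $\{\nu_j\notin\xi_j^*(\wh C)\}\cap E\subseteq\{\nu_j\notin\xi_j(\mu_0,\alpha/2)\}$.

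Next, I would bound the non-coverage probability by splitting on $E$ and $E^c$:
\[
\Prob\{\nu_j\notin\xi_j^*(\wh C)\mid\calA(y)=A\}\ \leq\ \Prob\{\nu_j\notin\xi_j(\mu_0,\alpha/2)\mid\calA(y)=A\}\ +\ \Prob\{E^c\mid\calA(y)=A\}.
\]
The first term equals $\alpha/2$ by the very definition of the quantiles $q_{j,\gamma}(\mu_0)$ of the distribution in \eqref{eq:CImu}: under $y\sim\dnorm_n(\mu_0,\sigma^2\bfI_n)$, the conditional distribution of $\{X_A^+(y-\mu_0)\}_j$ given $\calA(y)=A$ has quantiles $q_{j,\gamma}(\mu_0)$, so the oracle interval $\xi_j(\mu_0,\alpha/2)$ has exact conditional coverage $1-\alpha/2$ for $\nu_j=\{X_A^+\mu_0\}_j$. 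For the second term, the independence of $\wh C$ and $y$ implies $\Prob(E^c\mid\calA(y)=A)=\Prob(E^c)\leq\alpha/2$ by the assumption that $\wh C$ is a $1-\alpha/2$ confidence set for $\mu_0$. Summing gives non-coverage at most $\alpha$, which is exactly \eqref{eq:conservativepr}.

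The only delicate point is the use of independence to write $\Prob(\mu_0\notin\wh C\mid\calA(y)=A)=\Prob(\mu_0\notin\wh C)$, so I would emphasize that this is where the hypothesis ``$\wh C$ independent of $y$'' is invoked; the rest of the argument is a clean deterministic inclusion and an application of the definition of the quantiles $q_{j,\gamma}(\mu_0)$. Everything else (the set inclusion, the Bonferroni split) is mechanical.
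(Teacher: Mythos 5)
Your proof is correct, and it rests on exactly the same two pillars as the paper's: the deterministic inclusion $\xi_j(\mu_0,\alpha/2)\subseteq\xi_j^*(\wh C)$ on the event $\{\mu_0\in\wh C\}$ (coming from \eqref{eq:defupmax}--\eqref{eq:deflowmax}), and the independence of $\wh C$ from $y$ to decouple that event from the conditioning on $\calA(y)=A$. The only difference is how the two $(1-\alpha/2)$'s are combined. The paper uses a multiplicative decomposition,
\[
\Prob\{\nu_j\in\xi_j^*(\wh C)\mid\calA(y)=A\}\;\geq\;\Prob\{\nu_j\in\xi_j(\mu_0,\alpha/2)\mid\calA(y)=A,\ \mu_0\in\wh C\}\cdot\Prob(\mu_0\in\wh C\mid\calA(y)=A),
\]
which yields the slightly sharper bound $(1-\alpha/2)^2=1-\alpha+\alpha^2/4$, but requires invoking independence twice (once to drop the conditioning on $\{\mu_0\in\wh C\}$ in the first factor, once to evaluate the second factor). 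Your additive Bonferroni split of the non-coverage event over $E$ and $E^{\mathrm{c}}$ gives exactly $1-\alpha$ and needs independence only for $\Prob(E^{\mathrm{c}}\mid\calA(y)=A)=\Prob(E^{\mathrm{c}})$, since the oracle non-coverage term does not involve $\wh C$ at all; it is arguably the cleaner bookkeeping, at the cost of a negligibly weaker constant. Both arguments are complete and valid.
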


Proposition~\ref{lm:converage} shows that we can construct a valid confidence interval for post-selection
inference, provided a $(1-\alpha/2)$ confidence set $\wh C$ for $\mu_0$. 
Since its length is determined by the worst scenarios over all $\mu\in\wh C$ as in \eqref{eq:defupmax} and \eqref{eq:deflowmax}, the confidence interval $\xi_j^*(\wh C)$ can be overly conservative, as shown by our numerical results in Section \ref{sec:randsim}. Moreover, the assumption that $\wh C$ is independent of $y$ can be strong unless we use sample-splitting, which does not align well with the purpose of post-selection inference.
However, it provides good intuition for the use of the set $\wh C$ in our proposed randomization step, as described in what follows.

Hereafter, suppose that $\hmu$ is the center of $\wh C=\wh C(y)$ constructed from $y$.
Let $\tdmu$ be uniformly distributed over $\wh C$, i.e. $\tdmu \sim \calU(\wh C)$, and
${q}_{j,\gamma}(\wh C)$ be the $\gamma$-quantile of the distribution
\begin{align} \label{eq:qhatC}
[\{X_A^+(y^{*}-\hmu)\}_j\mid \calA(y^{*})=A],
\end{align}
where $y^{*}=\tdmu+\veps^*$ and $\veps^*$ is independent of $\tdmu$.
Construct an interval $\xi_j(\wh C)$ with $q_{j,\gamma}(\wh C)$ as
\begin{equation} \label{eq:CIfinal}
\xi_j(\wh C) := [\hat \nu_j - {q}_{j,1-\alpha/4}(\wh C), \hat \nu_j - {q}_{j,\alpha/4}(\wh C)].
\end{equation}
Note that the quantile ${q}_{j,\gamma}(\wh C)$ is calculated from a randomized plug-in estimate $\tdmu$ over the confidence set $\wh C$, which takes into account the uncertainty in $\tdmu$. Thus, this interval incorporates more variation than using a fixed point estimate $\hmu$ as in $\xi_j(\hmu,\alpha)$. 

Below, we present our main algorithm for constructing the confidence interval $\xi_j(\wh C)$.
Let $\partial \wh C$ denote the boundary of $\wh C$.

\begin{routine}\label{alg:CI}
Constructing interval $\xi_j(\wh C)$, $j \in \N_{|A|}$: 
\begin{enumerate}
\item Draw $\tdmu^{(k)}$ uniformly from $\partial \wh C$ for $k =1,\ldots, K$. 
\item For each $k$, draw $\{y^*_{ki},i=1,\ldots,N\}$ from $[y^* | \calA(y^*) = A]$ where $y^* \sim \calN_n(\tdmu^{(k)}, \sigma^2 {\bfI_n})$.
\item Estimate $q_{j,\gamma}(\wh C)$ by the quantiles of $\{[X_A^+(y^{*}_{ki}-\hmu)]_j, \forall k,i\}$
and construct $\xi_j(\wh C)$ \eqref{eq:CIfinal} with the estimated quantiles.
\end{enumerate}
\end{routine}
\noindent Here, we draw $\tdmu^{(k)}$ from $\partial \wh C$ for efficiency. Since $\wh C$ is usually an $n$-dimensional ellipsoid, uniform points in $\wh C$ will be close to its boundary when $n$ is large.

Our randomization of the plug-in estimate $\tdmu$ can be interpreted from a Bayesian perspective, regarding $\mu_0$ as a random vector. As discussed above, a confidence interval can be constructed if we have a good approximation to the distribution $[y^* \mid \calA(y^*)=A, \mu_0]$, where $y^*\mid \mu_0 \sim\dnorm_n(\mu_0,\sigma^2\bfI_n)$ is a new response vector independent of $y$. From a Bayesian perspective, a good approximation that takes into account the uncertainty in $\mu_0$ is the posterior predictive distribution 
\begin{align*}
p\big(y^*\mid \calA(y^*)=A,y\big) = \int p\big(y^*\mid \calA(y^*)=A, \mu_0\big) p(\mu_0\mid y)d\mu_0,
\end{align*}
where $p(\mu_0\mid y)$ is a posterior distribution for $\mu_0$.
Regarding $\calU(\wh C)$, the uniform distribution over $\wh C(y)$, as a posterior distribution for $\mu_0$, steps 1 and 2 in Algorithm \ref{alg:CI} can be interpreted as sampling from the above posterior predictive distribution.
Drawing $\tdmu$ in step 1 is equivalent to drawing samples from $p(\mu_0 \mid y)$ and 
drawing $y^*$ in step 2 is equivalent to sampling from $p(y^* \mid \calA(y^*)=A, \tdmu)$, which can be done by our Monte Carlo algorithm to be developed in the next section. In step 3, we find the quantiles of $[X_A^+(y^*-\hmu)\mid \calA(y^*)=A, y]$, where $\hmu$, the center of $\wh C$, is the posterior mean of $\mu_0$.

\subsection{Joint inference}\label{sec:JointInference}
Given the samples of $y^*$ drawn by Algorithm~\ref{alg:CI}, we can easily approximate the conditional distribution $[T(y^*)\mid \calA(y^{*})=A]$ for any function $T(\cdot)$ and carry out many inferential tasks. In particular, we extend our method to the construction of confidence sets for $\nu = X_A^+ \mu_0$. 

Recall $\hat \nu = X_A^+ y$ and let $q=|A|$. Given a matrix $H \in \R^{m \times q}$ for some $m\leq q$,
we wish to make inference on the parameter vector $H\nu\in\R^m$. 
Generalizing \eqref{eq:qhatC}, let $q_{H, \gamma}(\wh C; \ell_\delta)$ be the $\gamma$-quantile of the distribution
\begin{align} \label{eq:DistJointqhatC} 
\Big[\left\| H (X_A^+y^{*}-X_A^+\hmu) \right\|_\delta ~\Big|~ \calA(y^{*})=A\Big], 
\end{align}
where $\delta\in[1,\infty]$ specifies a particular $\ell_\delta$ norm used in our construction. From the above Bayesian interpretation, \eqref{eq:DistJointqhatC} approximates $[\| H\hat \nu - H\nu\|_\delta \mid \calA(y)=A]$ as its posterior predictive estimate. Then we construct a $1-\alpha$ confidence set for $H \nu$ as an $\ell_\delta$ ball
\begin{align} \label{eq:JointqhatC}
\xi_{H}(\wh C; \ell_\delta) := \left\{ \eta \in\R^m : \|\eta- H \hat \nu \|_\delta \leq q_{H, 1-\alpha/2}(\wh C; \ell_\delta) \right\},
\end{align}
where $\wh C$, as in \eqref{eq:CIfinal}, is a $1-\alpha/2$ confidence set for the mean $\mu_0$. For example, one can construct a confidence set for $\nu$ by letting $H = \mathbf{I}_{q}$. If one is interested in constructing a confidence set for the first two components in $A$, we can let $H = [e_1, e_2]^\trans$, where $e_j$ is the $j$-th standard basis vector in $\R^q$. In general, $\xi_{H}(\wh C; \ell_\delta)$ is a confidence set for some linear transformation of $\nu$. 

Now the remaining question is how to build the ($1-\alp/2$) confidence set $\wh C$ for $\mu_0$, unconditional on the selected model. There are a few methods that may be used to construct such a confidence set for high-dimensional regression problems, such as \cite{nick:vand:13, ewald18, zhou:etal:19}. We apply two different methods in this work. 
The first method is a two-step method, consisting of a projection and a shrinkage step \citep{zhou:etal:19}. This method builds an ellipsoid-shaped confidence set with different radii for strong and weak signals. The radius and center for weak signals are constructed using Stein's method. Denote by $\wh C_S$ and $\hmu_S$ the confidence set and its center by this method. It is shown by \cite{zhou:etal:19} that $\wh C_S$ is asymptotically honest,
\begin{align*}
\liminf_{n\to\infty} \inf_{\mu_0\in\R^n} \Prob(\mu_0 \in \wh C_S) \geq 1-\alp/2,
\end{align*}
where $\Prob$ is taken with respect to the distribution of $y$ in \eqref{eq:ydistn}. However, this method replies on sample-splitting in its construction, which adds another level of complexity in the application of our post-selection inference. Thus, we develop a second and simpler method, based on a given subset of covariates $X_A$. Let $A_0=\supp(\beta_0)$ be the true support such that $\mu_0=X_{A_0}\beta_{0A_0}$. If $A_0 \subset A$, then $X_A^+y \sim \calN_{|A|}(\beta_{0A}, \sigma^2(X_A^\trans X_A)^{-1})$. 
From this fact, we build a confidence set $\wh D$ for $\beta_{0A}$ which defines a confidence set $\wh C = X_A \wh D$ for $\mu_0$. The confidence set and its center built this way are denoted by $\wh C_A$ and $\hmu_A$. 
A convenient choice of $A$ would be $\calA(y)$, the support of lasso, and under this choice
$\wh C_A$ will achieve the nominal confidence level if $\Prob(\calA(y)={A})\to 1$ for some ${A}\supset A_0$ as $n\to \infty$. 
We will compare the performance of these two methods in Section \ref{sec:randsim} on simulated data. The comparison suggests that the second method usually achieves comparable coverage as the first method, while being more coherent with our post-selection inference procedure in practice. Therefore, we use the second method by default for all the numerical results in this work.

Now we summarize the steps of our post-selection inference on $H\nu$. 
\begin{routine}\label{alg:idea}
Constructing $\xi_H(\wh C; \ell_\delta)$: 
\begin{enumerate}
\item Construct ($1-\alp/2$) confidence set $\wh C_A$ centering at $\hmu_A$.
\item Apply Algorithm~\ref{alg:CI} with $\wh C=\wh C_A$ and $\hmu=\hmu_A$.
\item Estimate $q_{H, \gamma}(\wh C; \ell_\delta)$ by the quantile of $\{\|HX_A^+(y^{*}_{ki}-\hmu_A)\|_\delta, \forall k,i\}$ and construct $\xi_{H}(\wh C; \ell_\delta)$ in \eqref{eq:JointqhatC} with the estimated quantile.
\end{enumerate}
\end{routine}

\begin{remark}\label{rmk:lbd}
We have implicitly assumed a fixed tuning parameter $\lambda$ in \eqref{eq:lassodef} so far, but we observe that our method works well even using a $\lambda$ chosen in a data-dependent way. This is very appealing in applications: One may simply use lasso, with a data-dependent $\lambda$, to identify potentially importance variables, followed by our inference tool to construct an interval for each. Although not the focus of this paper, for low-dimensional data ($p<n$), the confidence set $\wh C$ for $\mu_0$ can be constructed with $A=\N_p$ by the sampling distribution of the least-squares estimator. 
\end{remark}

\subsection{An illustration}

In Section~\ref{sec:rand}, we covered four different methods for constructing $\wh I_j(\alpha)$. First, the oracle interval $\xi_j(\mu_0,\alpha)$ is constructed assuming the true mean $\mu_0$ is known (the oracle). This is not a practical method and is used for illustration only. 
Second, $\xi_j(\hmu,\alpha)$ uses an estimate $\hmu$ in place of $\mu_0$. Our main proposal $\xi_j(\wh C)$, presented in Algorithm \ref{alg:CI}, randomizes the plug-in estimate of $\mu_0$ by uniform sampling over the boundary of $\wh C$. Lastly, the interval $\xi_j^*(\wh C)$ defined in Proposition~\ref{lm:converage} controls the worst case over $\wh C$. A detailed comparison among the four methods will be conducted in Section \ref{sec:randsim}. In general, the oracle interval $\xi_j(\mu_0,\alpha)$ reaches the nominal coverage rate with the shortest interval length, the coverage of $\xi_j(\hmu,\alpha)$ tends to be lower than the desired level,
while $\xi_j^*(\wh C)$ seems too conservative. The interval $\xi_j(\wh C)$ reaches a good compromise between coverage and interval length. 

\begin{figure}[t]
	\centering
		\includegraphics[width=0.9\textwidth]{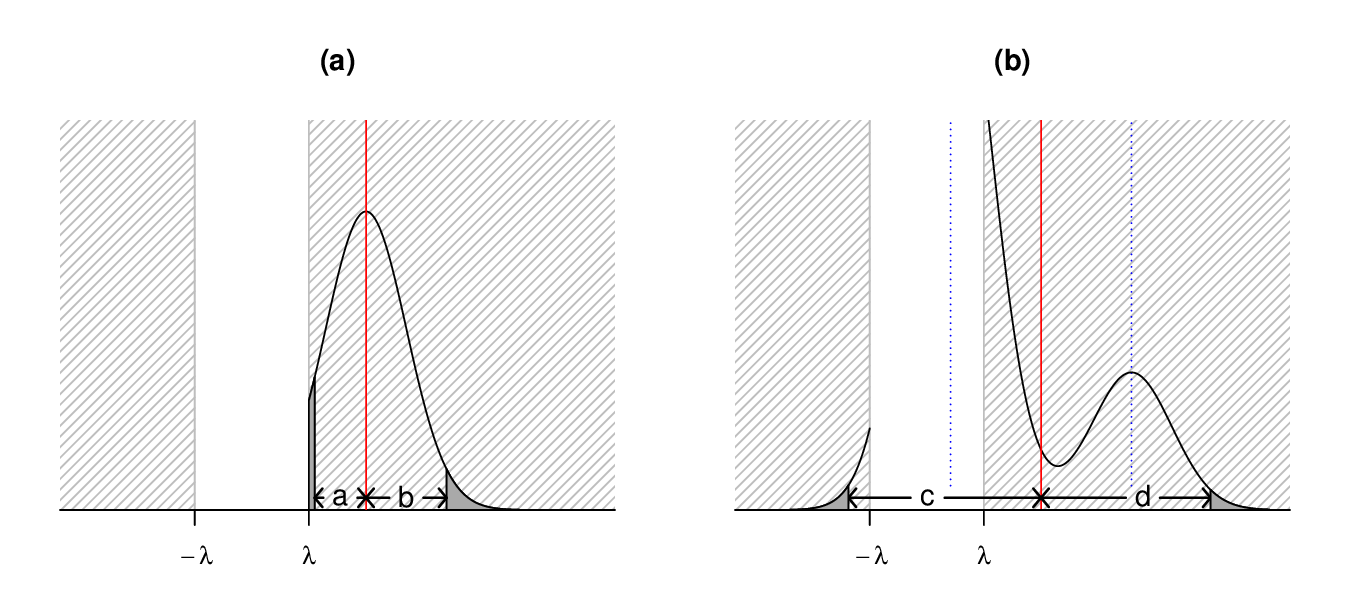}
\vspace*{-.2in}
	\caption{Illustration of the confidence intervals (a) $\xi(\hmu)$ and (b) $\xi(\wh C)$ in one-dimension. The red lines indicate $\hat \nu=\hbeta^{\text{LS}}$ and the two blue dotted lines indicate the boundaries of $\wh C$.}
	\label{fig:illust}
\end{figure}

Here, we illustrate the difference between $\xi_j(\hmu,\alpha)$ and $\xi_j(\wh C)$ for $p=1$, assuming $\|X_1\|^2= n$. In this case, the lasso $\hbeta=\sgn(\hbeta^{\text{LS}})(|\hbeta^{\text{LS}}|-\lambda)_+$, where $\hbeta^{\text{LS}}= X_1^\trans y/n$ is the least-squares estimate. The distribution of $\hbeta^{\text{LS}}$ given $\calA(y)=\{1\}$ is truncated to the intervals $(-\infty, -\lambda) \cup (\lambda,\infty):=T$ (shaded regions in Figure~\ref{fig:illust}). Write the two confidence intervals as $\xi(\hmu)$ and $\xi(\wh C)$, where $\hmu=X_1\hbeta^{\text{LS}}$ and $\wh C$ projected to $X_1$ is an interval $(\hbeta^{\text{LS}}-\Delta,\hbeta^{\text{LS}}+\Delta)$, centered at $\hbeta^{\text{LS}}$ between the two blue dotted lines in panel (b). Both $\xi(\hmu)$ and $\xi(\wh C)$ are centered at $\hat\nu=\hbeta^{\text{LS}}$, but with different end points. Let $\calT\dnorm_d(\mu,\Sigma,\calV)$ denote $\dnorm_d(\mu,\Sigma)$ truncated to the set $\calV$. The interval $\xi(\hmu)=[\hat \nu -b, \hat \nu +a]$ is constructed based on the quantiles of 
\begin{align*}
X_1^\trans y^*/n \mid \calA(y^*) = \{1\}\sim \calT\dnorm(\hbeta^{\text{LS}},\sigma^2/n,T),
\end{align*}
indicated by the dark tail regions in Figure~\ref{fig:illust}(a), where $y^* \sim \dnorm_n(\hmu, \sigma^2 \mathbf{I}_n)$. On the contrary, as shown in Figure~\ref{fig:illust}(b), the interval $\xi(\wh C) = [\hat \nu-d, \hat \nu+c ]$ is constructed from the quantiles of a mixture of two truncated normal distributions, i.e. $\calT\dnorm(\hbeta^{\text{LS}}\pm\Delta,\sigma^2/n,T)$, each centered at a boundary of the interval $\wh C$ (after being projected to $X_1$). If the true parameter $\beta_0\in(-\lambda,\lambda)$ is close to zero, then $\xi(\wh C)$ is likely to cover $\beta_0$ while the other interval $\xi(\hmu)$ will fail. In fact, the difficulty in post-selection inference largely  stems from such a situation in which some $\beta_{0j}$ is very close to zero and consequently the conditional distribution of $y$ given the selection event can change substantially with the mean $\mu_0$. Our method tackles this difficult problem by simulating from a mixture of such conditional distributions with mean $\tdmu$ randomized over a suitable neighborhood of $\mu_0$.

\section{Conditional sampling} \label{sec:condsamp}

In this section, we develop an MCMC sampler to draw $y^*$ such that $\calA(y^*) = A$, which is
the key conditional sampling step in our method. Our sampler is based on the idea of estimator augmentation.
So we first briefly review estimator augmentation for the lasso. 

\subsection{Estimator augmentation} \label{sec:EAlasso}
Let $\Psi = X^\trans X / n$. We start from the Karush-Kuhn-Tucker (KKT) condition for the lasso defined in \eqref{eq:lassodef},
\begin{eqnarray}\label{eq:lassoKKT}
\frac{1}{n}X^\trans y = {\Psi}\hbeta + \lambda W S, 
\end{eqnarray}
where $W = \diag(w_i)_{i=1}^p$ and $S$ is the subgradient of the $\ell_1$ norm at $\hbeta$:
\begin{align}\label{eq:subgdef}
\begin{cases}
S_i = \sgn(\hbeta_i)  \quad\text{if } \hbeta_i\ne 0, \\
S_i  \in [-1, 1] \quad\text{otherwise}.  \\
\end{cases}
\end{align}
\cite{zhou:14} inverted the KKT condition to find the sampling distribution of the so-called augmented estimator, $(\hbeta, S)$, linking its density to that of $X^\trans y$. Let $U = \frac{1}{n}X^\trans \varepsilon$ and $\Theta = (\hbeta_\calA, S_\calI)$, where both $\calA=\supp(\hbeta)$ and $\calI=\N_p \setminus \calA$ are random as functions of $\hbeta$. Note that $(\Theta,\calA)$ gives a parameterization of $(\hbeta,S)$ due to the definition of the subgradient $S$. The KKT condition can be rewritten,
\begin{eqnarray}\label{eq:UandH} 
U= {\Psi}\hbeta + \lambda WS - \frac{1}{n}X^\trans\mu_0 
:= H(\Theta, \calA ; \mu_0, \lambda).  
\end{eqnarray}
{Unless otherwise noted, we assume that $n < p$ and $X$ has  full row rank, i.e. $\rank(X) = n$.
Under this setting, the vector $U\in \row(X)$, an $n$-dimensional subspace of $\R^p$. Let $V$ be a $p \times p$ orthogonal matrix such that (i) the first $n$ columns of $V$, indexed by $R = \{1, \ldots, n\}$, consist of $n$ orthonormal eigenvectors associated with the positive eigenvalues of $\Psi$, and (ii) the last $p-n$ columns, indexed by $N=\{n+1, \ldots, p\}$, are a collection of orthonormal vectors that forms a basis of $\nul(X)$.}
Then $U$ can be re-expressed by its coordinates with respect to $V_R$ as 
\begin{align}\label{eq:distrnR}
R=V_R^\trans U\sim\dnorm_n(0,\sigma^2{\Lambda}/n),
\end{align}
where ${\Lambda} = \diag(\Lambda_i)_{i=1}^n$ and $\Lambda_i$'s are the positive eigenvalues of $\Psi$. Let $f_R$ be the density of $R$.
Equation~\eqref{eq:UandH} enforces a set of constraints on the $p$-vector $S$, i.e. $V_N^\trans WS=0$, since $WS$ must lie in $\row(X)$.
Denote the value of the random vector $(\hbeta,S)$ by $(b,s)$ and the corresponding value of $(\Theta,\calA)$ by $(\theta,A)=(b_A,s_I,A)$, where $A \subset \N_p$ and $I = \N_p \setminus A$. Then $\theta$ must satisfy the constraints
\begin{eqnarray} 
V_{AN}^{\trans}W_{AA}\sgn(b_A) + V_{IN}^{\trans}W_{II}s_{I}  &=& 0,  \label{eq:constraint1} \\
\|s_I\|_\infty &\leq& 1. \label{eq:constraint2}
\end{eqnarray}
Let $q = |A| \leq n$ (Remark \ref{rm:general}). Differentiating \eqref{eq:constraint1}, one sees that $ds_I \in \nul(V_{IN}^{\trans}W_{II})$, which is an $(n-q)$-dimensional subspace of $\R^{|I|}$. Thus $s_I$
can be parameterized by $s_F\in \R^{n-q}$ such that $ds_I = B(I)ds_F$, where $F$ is a size-$(n-q)$ subset of $I$ and $B(I) \in \R^{|I| \times (n - q)}$ is an orthonormal basis of $\nul(V_{IN}^{\trans}W_{II})$. 
Under mild conditions the $H$ defined in \eqref{eq:UandH} is a bijection (Lemma 3 in \cite{zhou:14}), which is used to derive the distribution for $(\Theta,\calA)$ from the density $f_R$. 
To ease our notation, let $d\theta\defi db_A ds_F$ be a differential form of order $n$. \cite{zhou:14} showed that the distribution of $(\Theta, \calA)$ can be represented by such $n$-forms:

\begin{theorem}[Theorem 2 in \cite{zhou:14}]\label{thm:lasso}
Assume $p>n$, the columns of $X$ are in general position and every $(p-n)$ rows of $V_N$ are linearly independent. If $y\sim\dnorm_n(\mu_0,\sigma^2\bfI_n)$ and $\lambda>0$, then the joint distribution of $(\Theta, \calA)=(\hbeta_\calA,S_\calI,\calA)$ is given by
\begin{eqnarray}\label{eq:joint}
\Prob_{\Theta,\calA}(d\theta, A) =  f_R\big(V_R^\trans H(\theta, A ; \mu_0, \lambda); \sigma^2 \big)|\det T(A; \lambda)|  d\theta
\end{eqnarray}
for $(\theta,A)$ satisfying \eqref{eq:constraint1} and \eqref{eq:constraint2}, where $f_R(\bullet;\sigma^2)$ is the density of the distribution in \eqref{eq:distrnR} and $T(A;\lambda) = [V_R^\trans {\Psi}_A | \lambda V_{IR}^\trans W_{II}B(I)] \in \R ^{n \times n}$.
\end{theorem}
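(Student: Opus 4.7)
The plan is to derive \eqref{eq:joint} as the pushforward of the Gaussian density $f_R$ of $R=V_R^\trans U$ through the KKT map \eqref{eq:UandH}, worked out stratum by stratum over the active set $A$. Since $U=X^\trans\varepsilon/n$ lies in $\row(X)$ and is represented faithfully by its coordinates $R\in\R^n$ in the basis $V_R$, it suffices to push $f_R$ back along the restriction of $V_R^\trans H(\cdot,A;\mu_0,\lambda)$ to the constrained parameter space of $(\theta,A)$.

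First I would parameterize each stratum. Fix $A$ with $q=|A|$ and $s_A=\sgn(b_A)$. The active-side free coordinates are $b_A\in\R^q$ (with no vanishing component), while the inactive subgradient $s_I$ is pinned down up to the linear constraint \eqref{eq:constraint1}, so $ds_I\in\nul(V_{IN}^\trans W_{II})$. Under the hypothesis that every $n$ columns of $X$ are linearly independent, $V_{IN}^\trans W_{II}$ has full row rank $p-n$, its null space has dimension $n-q$, and it admits the orthonormal basis $B(I)$. Writing $ds_I=B(I)\,ds_F$ with $s_F\in\R^{n-q}$ yields the $n$-form $d\theta=db_A\,ds_F$ as a chart on the stratum.

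Next I would compute the Jacobian of $V_R^\trans H(\cdot,A;\mu_0,\lambda)$. On the stratum $\hbeta_I=0$ and $s_A=\sgn(b_A)$ is locally constant, so the only nonzero contributions when differentiating \eqref{eq:UandH} come from $db_A$ and $ds_I$. Projecting to $V_R$ gives
\begin{align*}
dR = V_R^\trans \Psi_A\,db_A + \lambda V_{IR}^\trans W_{II}\,ds_I = V_R^\trans \Psi_A\,db_A + \lambda V_{IR}^\trans W_{II}B(I)\,ds_F,
\end{align*}
because $W$ is diagonal, so the projection $V_R^\trans W\,ds$ collapses to its inactive block $V_{IR}^\trans W_{II}\,ds_I$. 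The two coefficient blocks assemble into the $n\times n$ matrix $T(A;\lambda)=[\,V_R^\trans\Psi_A\mid\lambda V_{IR}^\trans W_{II}B(I)\,]$ in the statement. Invoking Lemma~3 of \cite{zhou:14}, which under the general-position hypothesis states that $H(\cdot,A;\mu_0,\lambda)$ is a smooth bijection from each stratum onto its image in $\row(X)$ and that $T(A;\lambda)$ is nonsingular, the change-of-variables formula for differential forms pushes $f_R(V_R^\trans H(\theta,A;\mu_0,\lambda);\sigma^2)$ with Jacobian factor $|\det T(A;\lambda)|$, yielding \eqref{eq:joint}.

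The main obstacle is establishing sharpness of the stratification and nonsingularity of $T(A;\lambda)$: one must verify that the KKT map does not identify two different constrained configurations $(\theta,A)$ and $(\theta',A')$ with the same $U$, and that the $n\times n$ Jacobian is invertible. Both ultimately rest on the assumption that every $n$ columns of $X$ are linearly independent; granting these structural facts from \cite{zhou:14}, the remainder is a routine differential-forms computation and an application of the standard change-of-variables formula to the Gaussian density $f_R$.
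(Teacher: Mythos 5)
This theorem is quoted from Zhou (2014) and is not proved in the present paper; the surrounding text of Section~3.1 only sketches the ingredients (the representation $R=V_R^\trans U$, the constraints \eqref{eq:constraint1}--\eqref{eq:constraint2}, the parameterization $ds_I=B(I)\,ds_F$, and the bijectivity of $H$ via Lemma~3 of Zhou (2014)). Your reconstruction — pushing $f_R$ through $V_R^\trans H(\cdot,A;\mu_0,\lambda)$ stratum by stratum, with the Jacobian $dR=V_R^\trans\Psi_A\,db_A+\lambda V_{IR}^\trans W_{II}B(I)\,ds_F$ assembling into $T(A;\lambda)$ — matches that sketch exactly and computes the determinant factor correctly, so it is the same approach as the source, correctly carried out modulo the structural facts you explicitly defer to Zhou (2014).
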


\begin{remark}\label{rm:general}
The right side of \eqref{eq:joint} defines a joint density of $(\Theta,\calA)$ with respect to the parameterization $(b_A,s_F)$ for $\theta$. This density will be used to develop an MCMC algorithm for our conditional sampling step.
Mild assumptions are imposed on the design matrix. If the entries of $X$ are drawn from a continuous distribution over $\R^{n\times p}$, these assumptions will hold almost surely.
These assumptions also guarantee that the lasso solution is unique and $|\calA|\leq n$ for every $y \in \R^n$ \citep{tibs:13}. See \cite{Ewald20} for more recent discussion on the uniqueness of the lasso estimator.
Hereafter, when conditioning on $\calA=A$ we always assume that $|A|\leq n$.
\end{remark}

\subsection{The target distribution}

As an immediate consequence of Theorem \ref{thm:lasso}, we can obtain a density for the conditional distribution $[\hbeta_A, S_I \mid \calA=A]$ for a fixed subset $A$, which is directly related to our target conditional sampling problem.
\begin{corollary}\label{cor:conden}
Under the same assumptions of Theorem~\ref{thm:lasso}, the conditional distribution $[\hbeta_A, S_I \mid \calA=A]$
is given by 
\begin{eqnarray}\label{eq:cond}
\Prob_{\Theta|\calA}(d\theta |A) \propto  f_R\big(V_R^\trans H(\theta, A ; \mu_0, \lambda); \sigma^2 \big) d\theta:=\pi(\theta\mid A;\mu_0,\sigma^2,\lambda) d\theta,
\end{eqnarray}
where $\theta=(b_A,s_I)$ satisfies the constraints in \eqref{eq:constraint1} and \eqref{eq:constraint2}.
\end{corollary}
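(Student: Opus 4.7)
The plan is to derive the conditional density directly from the joint distribution in Theorem~\ref{thm:lasso} by applying the elementary definition of a conditional distribution: the conditional density of $\Theta$ given $\calA=A$ is the joint density evaluated at $\calA=A$, normalized by the marginal probability $\Prob(\calA=A)$. Since $\calA$ is a discrete random variable taking values in the power set of $\N_p$, this operation is unambiguous: we simply slice the joint distribution at the event $\{\calA=A\}$ and renormalize.

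Concretely, I would start from \eqref{eq:joint} and integrate the right-hand side in $\theta$ over the constraint set defined by \eqref{eq:constraint1} and \eqref{eq:constraint2} for the fixed subset $A$, obtaining
\begin{equation*}
\Prob(\calA=A) \;=\; |\det T(A;\lambda)| \int f_R\bigl(V_R^\trans H(\theta,A;\mu_0,\lambda);\sigma^2\bigr)\,d\theta,
\end{equation*}
where I have pulled $|\det T(A;\lambda)|$ outside the integral because it depends only on $A$ and $\lambda$, not on $\theta$. Then, dividing \eqref{eq:joint} by this marginal gives
\begin{equation*}
\Prob_{\Theta|\calA}(d\theta\mid A) \;=\; \frac{f_R\bigl(V_R^\trans H(\theta,A;\mu_0,\lambda);\sigma^2\bigr)}{\displaystyle\int f_R\bigl(V_R^\trans H(\theta',A;\mu_0,\lambda);\sigma^2\bigr)\,d\theta'}\,d\theta,
\end{equation*}
which is the claimed proportionality \eqref{eq:cond}, with the normalizing constant absorbing the cancelled $|\det T(A;\lambda)|$.

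The only minor points to justify are that the marginal $\Prob(\calA=A)$ is strictly positive (so that the conditional density is well-defined on the event of interest), and that the support of the conditional distribution is precisely the subset of $\{\theta:\theta=(b_A,s_F)\}$ satisfying \eqref{eq:constraint1} and \eqref{eq:constraint2}. Positivity holds whenever the assumptions of Theorem~\ref{thm:lasso} are met and $A$ is an active set attainable by the lasso with positive probability; the support description is inherited verbatim from Theorem~\ref{thm:lasso} because slicing at $\calA=A$ preserves the constraints on $\theta$. I do not anticipate a substantive obstacle: because $|\det T(A;\lambda)|$ is $\theta$-free and the constraint set is $A$-specific but $\theta$-independent, conditioning on $\calA=A$ is an almost trivial operation on the joint density, and the result follows by one line of algebra together with a brief verification of the support.
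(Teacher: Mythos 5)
Your proposal is correct and matches the paper's treatment: the paper offers no separate proof, presenting the corollary as an immediate consequence of Theorem~\ref{thm:lasso} obtained by slicing the joint distribution \eqref{eq:joint} at the discrete event $\{\calA=A\}$, renormalizing, and absorbing the $\theta$-free factor $|\det T(A;\lambda)|$ into the proportionality constant. Your additional remarks on positivity of $\Prob(\calA=A)$ and on the support being inherited from \eqref{eq:constraint1}--\eqref{eq:constraint2} are sound and consistent with the paper's subsequent discussion of $\pi(\theta\mid A)$ as an $n$-variate density in the parameterization $(b_A,s_F)$.
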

The conditional distribution $[\hbeta_A, S_I \mid \calA=A]$ has an especially simple density $\pi(\theta\mid A)$,
which is just an $n$-variate density with respect to a fixed parameterization $(b_A,s_F)\in\R^n$ as the active set $\calA$ is fixed to $A$ and the set $F$ only depends on $A$. See Corollary 1 in \cite{zhou:14} for a more detailed discussion on the truncated Gaussian nature of $\pi$. 

Given the density in Corollary~\ref{cor:conden}, we develop a Metropolis-Hastings (MH) sampler to draw $(\hbeta_A, S_I)$ given the fixed active set, $\calA = A$. This will achieve our goal of sampling $[y\mid \calA(y)=A]$ because of the following result:
 
\begin{theorem}\label{thm:condsample}
Suppose the assumptions of Theorem~\ref{thm:lasso} hold and $(\hbeta^*_A,S^*_I)$ follows the distribution in \eqref{eq:cond}. Then we have 
\begin{eqnarray}\label{eq:represent}
[y\mid \calA(y)=A]=\Big[X_A\hbeta^*_A + n\lambda(X^\trans)^+\big\{W_A\sgn(\hbeta^*_A) + W_I S^*_I\big\}\Big],
\end{eqnarray}
and consequently,
\begin{eqnarray}\label{eq:XAy}
[X_A^+ y\mid \calA(y)=A]  = \left[\hbeta^*_A + n\lambda(X_A^\trans X_A)^{-1}W_{AA}\sgn(\hbeta^*_A)\right].
\end{eqnarray}
\end{theorem}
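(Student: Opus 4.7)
The plan is to invert the KKT condition \eqref{eq:lassoKKT} to obtain $y$ as a deterministic function of $(\hbeta, S)$, and then push forward the conditional law in Corollary~\ref{cor:conden}.

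For \eqref{eq:represent}, I would rewrite the KKT condition as $X^\trans y = X^\trans X \hbeta + n\lambda W S$. Since $X$ has full row rank $n$, $X^\trans \in \R^{p \times n}$ has full column rank, so $(X^\trans)^+ X^\trans = I_n$. Applying $(X^\trans)^+$ to both sides yields
\[
y = X\hbeta + n\lambda (X^\trans)^+ WS.
\]
On the event $\calA = A$ we have $\hbeta_I = 0$ and $S_A = \sgn(\hbeta_A)$, so $X\hbeta = X_A \hbeta_A$ and the $p$-vector $WS$ splits into $W_A\sgn(\hbeta_A)$ on $A$ and $W_I S_I$ on $I$; substituting gives \eqref{eq:represent} pointwise in $(\hbeta_A, S_I)$. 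I would then observe that on the event $\{\calA(y) = A\}$, the lasso map $y \mapsto (\hbeta_A(y), S_I(y))$ is a bijection onto the set of $(b_A, s_I)$ satisfying \eqref{eq:constraint1}--\eqref{eq:constraint2}: the forward direction is well-defined and injective because under the $n$-linear-independence assumption (Remark~\ref{rm:general}) the lasso solution is unique, and the displayed inversion provides the inverse map. Combining this bijection with Corollary~\ref{cor:conden} identifies $[y \mid \calA(y) = A]$ with the pushforward of $(\hbeta^*_A, S^*_I)$ under the right-hand side of \eqref{eq:represent}.

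For \eqref{eq:XAy}, the cleanest route is to extract the rows of the KKT condition indexed by $A$. Using $\hbeta_I = 0$ and $S_A = \sgn(\hbeta_A)$,
\[
X_A^\trans y = X_A^\trans X_A \hbeta_A + n\lambda W_{AA} \sgn(\hbeta_A).
\]
Since every $n$ columns of $X$ are linearly independent and $|A| \le n$, $X_A$ has full column rank, so $X_A^\trans X_A$ is invertible and $X_A^+ = (X_A^\trans X_A)^{-1} X_A^\trans$. Left-multiplying by $(X_A^\trans X_A)^{-1}$ and substituting $(\hbeta_A, S_I)$ by $(\hbeta^*_A, S^*_I)$ yields \eqref{eq:XAy}. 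One could equivalently premultiply \eqref{eq:represent} by $X_A^+$ and check that the $I$-block contribution of $(X^\trans)^+ W S$ collapses after reduction, but the block-KKT derivation is shorter.

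The only non-trivial step is the bijection claim, and in particular the verification that inverting the KKT relations recovers the original $y$ on $\{\calA(y) = A\}$ rather than merely producing some $y$ consistent with the relations. This reduces to two facts already available: (i) the lasso solution $(\hbeta(y), S(y))$ is unique under the $n$-linear-independence assumption, so the forward map is well-defined and injective; and (ii) given $(\hbeta, S)$, the equation $X^\trans y = X^\trans X \hbeta + n\lambda W S$ determines $y$ uniquely because $X^\trans$ has a left inverse. With these in hand, the rest is direct algebraic manipulation of the KKT condition.
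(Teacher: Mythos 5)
Your proposal is correct and follows essentially the same route as the paper's proof: invert the KKT condition via $(X^\trans)^+$ using the full-row-rank of $X$ to express $y$ as a deterministic function of $(\hbeta_A, S_I)$ on the event $\{\calA(y)=A\}$, appeal to the uniqueness/bijection between $y$ and the augmented estimator, and obtain \eqref{eq:XAy} by extracting the $A$-indexed block of the KKT system and inverting $X_A^\trans X_A$. The paper condenses the bijection step by citing Lemma~1 of \cite{zhou:14}, whereas you spell it out, but the argument is the same.
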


As described in Algorithm \ref{alg:CI}, we wish to draw $[y^* \mid \calA(y^*)=A]$ for $y^*\sim \dnorm_n(\tdmu,\sigma^2\bfI_n)$. 
Once we have drawn $(\hbeta^*_A, S^*_I)$ from the density $\pi(\theta\mid A;\tdmu,\sigma^2,\lambda)$ \eqref{eq:cond}, we can easily obtain a sample of $y^*$ by \eqref{eq:represent}, which follows the target conditional distribution. Note that only $X_A^+ y^*$ is needed in \eqref{eq:qhatC} and \eqref{eq:DistJointqhatC}, which can be calculated directly with \eqref{eq:XAy}.

To provide an intuitive understanding of the conditional distributions in \eqref{eq:cond} and \eqref{eq:represent}, let us consider a simple example with $n>p=2$, $\Psi = \mathbf{I}_2$, $\mu_0=X\beta_0$ and $A = \{1\}$. In this low-dimensional setting, $\nul(X)=\{0\}$ and thus the constraint \eqref{eq:constraint1} is trivially satisfied for all $s\in\R^2$ (as $V_N=0$). As shown in Figure~\ref{fig:sampleSpace}(a), the sample space of $(\hbeta_A,S_I)=(\hbeta_1,S_2)$ is 
\begin{align*}
(-\infty,0) \times [-1,1] \cup (0,\infty)\times[-1,1] :=\Omg_{-1} \cup \Omg_{1},
\end{align*}
which is an essentially connected set (i.e. having a connected closure). Since $\Psi=\bfI_2$, we may choose $V_R=\bfI_2$, whose columns form an orthonormal basis for $\row(X)=\R^2$, and under this choice $f_R$ is the density of $\dnorm_2(0,\sigma^2\bfI_2/n)$. The contours of $[(\hbeta_1,S_2)\mid \calA=\{1\}]$ are shown in Figure~\ref{fig:sampleSpace}(a). It is easier to understand this distribution if further conditioning on $\sgn(\hbeta_1)=s_1$:
\begin{align*}
(\hbeta_1,S_2)\mid (\calA=\{1\},\sgn(\hbeta_1)=s_1) \sim \calT\dnorm_2(\mu(s_1),\Sigma,\Omg_{s_1}),
\quad\quad s_1\in\{-1,1\},
\end{align*}
which is a bivariate normal distribution truncated to $\Omega_{s_1}$ for each $s_1$. The mean and covariance matrix are
\begin{eqnarray*}
\mu(s_1) = \begin{pmatrix} \beta_{01} - \lambda s_1 \\\ \beta_{02}/\lambda \end{pmatrix}, \quad
\Sigma= \frac{\sigma^2}{n}\begin{pmatrix} 1 & 0\\ 0 & 1/\lambda^2 \end{pmatrix}.
\end{eqnarray*}
The two sets of contours in panel (a), separated by the line segment $\{0\}\times [-1,1]$, correspond to the two truncated normal distributions with different centers, $\mu(1)$ and $\mu(-1)$.
Figure~\ref{fig:sampleSpace}(b) plots the contours of the conditional distribution of $X^\trans y/n$ given $\calA=\{1\}$, which is a bivariate normal distribution $\dnorm_2(\beta_0,\sigma^2 \bfI_2/n)$ truncated to the union of two disconnected regions, 
\begin{align*}
(-\infty,-\lambda)\times [-\lambda,\lambda] \cup (\lambda,\infty)\times [-\lambda,\lambda].
\end{align*}
The contrast between the two sample spaces illustrates the potential advantage in designing Monte Carlo algorithms in the space of the augmented estimator $(\hbeta_A,S_I)$ over the space of $y$.

\begin{figure}[!t]
	\centering
		\includegraphics[width=0.9\textwidth]{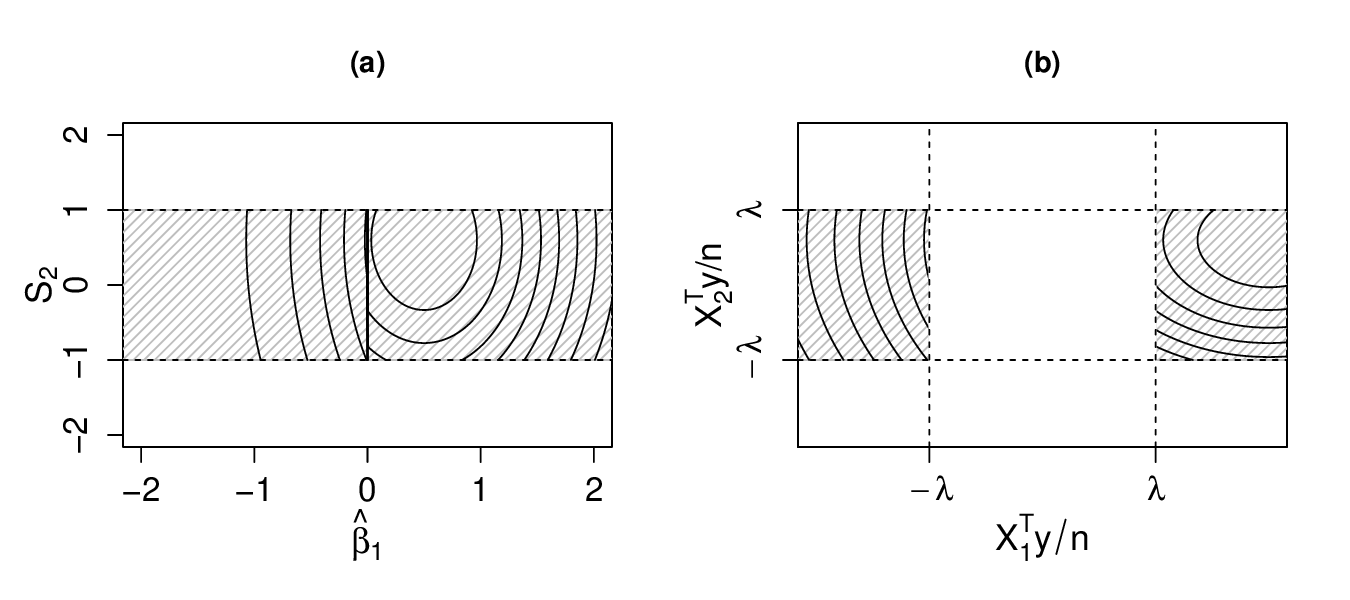}
\vspace*{-.2in}
	\caption{The conditional distributions of (a) $(\hbeta_1, S_2)$ and (b) $X^\trans y/n$ given $A= \{1\}$ for $p=2$.}
	\label{fig:sampleSpace}
\end{figure}

\subsection{A Metropolis-Hastings sampler}\label{sec:MHLS}

In what follows, we describe our MH sampler in detail. For notational brevity, write the target density as $\pi(\theta)\equiv\pi(\theta\mid A; \tdmu, \sigma^2, \lambda)$ hereafter, where the space of $\theta=(b_A,s_I)$ is defined by \eqref{eq:constraint1} and \eqref{eq:constraint2}. These constraints must be satisfied in each step of the sampling process, which presents a technical challenge for our Monte Carlo algorithm. We adopt a coordinate-wise update of $\theta$. Let $\theta^{(t)}$ be the value of $\theta$ at the $t$-th iteration. 
After proposing a new value $\theta^\dagger_i$ for its $i$-th component, the MH ratio is computed as
$$
\zeta = \min \left\{1, \frac{\pi(\theta^\dagger)}{\pi(\theta^{(t)})}\frac{q(\theta^\dagger, \theta^{(t)})}{q(\theta^{(t)}, \theta^\dagger)} \right\},
$$
where $q(\theta^{(t)}, \theta^\dagger)$ is the transition kernel of the proposal $\theta^\dagger$ given $\theta^{(t)}$. 
 If $\theta^\dagger$ is accepted, let $\theta^{(t+1)} = \theta^\dagger$. Otherwise, we reuse the previous state, i.e. $\theta^{(t+1)} = \theta^{(t)}$. 

We first derive explicit expressions for the feasible region of $\theta$ defined by \eqref{eq:constraint1} and \eqref{eq:constraint2}.
For the sake of notational simplicity, put
\begin{align*} 
G = V_{IN}^{\trans}W_{II} \in \R^{(p-n) \times |I|},\quad\quad u =u(s_A)= -V_{AN}^{\trans}W_{AA}\sgn(b_A) \in \R^{p-n}, 
\end{align*}
where $s_A=\sgn(b_A)$, and rewrite \eqref{eq:constraint1} as $G s_I = u$. Recall $|I|=p-|A|$ and $q=|A|$.
Since $p-n$ constraints are imposed on $s_I$, there are only $n-q$ free coordinates in $s_I$. Partition $s_I$ into free and dependent components and denote them by $s_F\in\R^{n-q}$ and $s_D\in\R^{p-n}$, respectively. Partition the columns of $G$ accordingly. Then \eqref{eq:constraint1} can be rewritten
\begin{equation} \label{eq:S_Iconstraint}
G_Fs_F + G_Ds_D=u \iff
s_D = G_D^{-1}(u-G_Fs_F),  
\end{equation}
which shows that $s_D$ is a function of $(b_A,s_F)\in\R^n$. Now the feasible region for $\theta$ can be equivalently defined by 
\begin{align}\label{eq:feasible}
\|s_F\|_{\infty}\leq 1,\quad\quad\|G_D^{-1}(u(s_A)-G_Fs_F)\|_\infty \leq 1.
\end{align}
Note that every time we update any component of $(b_A, s_F)$, $s_D$ needs to be updated accordingly via \eqref{eq:S_Iconstraint}. 
Below, we provide details about how to draw $(b_A, s_F)$, the free coordinates of $\theta$,
given the current value $(b_A^{(t)},s_F^{(t)})$. We assume that $(b_A^{(t)},s_F^{(t)})$ is feasible and satisfies \eqref{eq:feasible}.
 
For the active coefficients $b_A$, a normal distribution is used as the proposal, 
$$b_i^\dagger | b_i^{(t)}  \sim \calN(b_i^{(t)}, \tau_i^2),\quad\quad i \in A.$$
By using a symmetric proposal distribution, the MH ratio becomes the ratio of the target densities only,
\begin{equation} \label{eq:P1acceptancerate}
\zeta = \min \left\{1, \frac{\pi(\theta^\dagger)}{\pi(\theta^{(t)})}\right\}
= \min \left\{1, \frac{f_R\big(V_R^\trans H(\theta^\dagger, A ; \tdmu, \lambda); \sigma^2 \big) }
{f_R\big(V_R^\trans H(\theta^{(t)}, A; \tdmu, \lambda); \sigma^2 \big) } \right\}. 
\end{equation}
Under this proposal, $s_F^\dag=s_F^{(t)}$ is unchanged. 
If $\sgn(b_i^{\dagger}) = \sgn(b_i^{(t)})$, then $s_A^\dag=s_A^{(t)}$.
Consequently, $\theta^\dag$ satisfies the constraints in \eqref{eq:feasible} and thus is feasible.
If $\sgn(b_i^{\dagger}) \neq \sgn(b_i^{(t)})$, then 
$s_A^\dagger$ is different from $s_A^{(t)}$, with the $i$-th element replaced by $\sgn(b_i^{\dagger})$. We need to verify the second inequality in \eqref{eq:feasible}. Let $u^\dagger = u(s_A^\dagger)$. 
If $\|G_D^{-1}(u^\dagger-G_Fs^\dagger_F)\|_\infty \leq 1$, then $\theta^\dagger$ is feasible and we compute the MH ratio as in \eqref{eq:P1acceptancerate}. Otherwise, we move to the next component in $A$. 

When updating each component in $s_F$, denoted by $(s_F)_k$, it would be inefficient to use a naive proposal distribution, such as ${\cal U}(-1,1)$, since it does not guarantee every component of $s_D^{\dagger}$ will stay in $[-1,1]$.  A better approach is to compute the feasible range of $(s_F)_k$. Holding $s_A$ and 
$(s_F)_{-k}$ as constants,
the second inequality in \eqref{eq:feasible} defines $2(p-n)$ linear constraints on $(s_F)_{k}$,
from which the feasible range of $(s_F)_{k}$, $[LB_k, UB_k]$, can be computed,
\begin{eqnarray}
LB_k &=& \max\bigg\{-1, M^{-1} \big(- \mathbf{1}_{[p-n]} + G_D^{-1}u - (G_D^{-1}G_F)_{ -k}(s_F)_{-k}  \big)\bigg\},\label{eq:LB}\\
UB_k &=& \min\bigg\{1, M^{-1} \big( \mathbf{1}_{[p-n]} + G_D^{-1}u - (G_D^{-1}G_F)_{ -k}(s_F)_{-k}  \big)\bigg\},\label{eq:UB}
\end{eqnarray}
where $M = \diag\big((G_D^{-1}G_F)_{k}\big)$ is a $(p-n)\times(p-n)$ diagonal matrix having the $k$-th column of $G_D^{-1}G_F$ as its diagonal elements. Calculate $LB_k^{(t)}$ and $UB_k^{(t)}$ with $u^{(t)}=u(s_A^{(t)})$
and $(s_F^{(t)})_{-k}$. Note that $LB_k^{(t)} < UB_k^{(t)}$ since the current value $(s^{(t)}_F)_{k}\in[LB_k^{(t)}, UB_k^{(t)}]$ by assumption.
Propose $(s_F)_{k}^\dagger$ from $\mathcal{U}(LB_k^{(t)},UB_k^{(t)})$ and compute $s_D^\dagger$ by plugging  $s_F^\dagger$ and $u^\dagger=u^{(t)}$ in \eqref{eq:S_Iconstraint}.
Because $[LB_k, UB_k]$ does not depend on the current value of $(s_F)_{k}$, this proposal is symmetric, 
$q(\theta^\dagger, \theta^{(t)})=q(\theta^{(t)}, \theta^\dagger).$
Therefore, the MH ratio again reduces to \eqref{eq:P1acceptancerate}.

\setcounter{algorithm}{2}
\begin{algorithm}[!t]
\caption{$MH(\tdmu,\sigma,\lambda)$} 
\label{alg:MH} 
\begin{algorithmic}[1] 
	\STATE Given $(b_A, s_F)^{(t)}$,
	\FOR{$i\in A$}
		\STATE draw $b^{\dagger}_i \sim \calN(b^{(t)}_i , \tau^2_i)$.
		\IF{$\sgn(b_i^\dagger) \neq \sgn(b_i^{(t)})$ and $\theta^\dagger$ is infeasible}
		\STATE continue.
		\ELSE
		\STATE $b^{(t+1)}_i \leftarrow b^{\dagger}_i$ with probability $\zeta$; otherwise $b^{(t+1)}_i \leftarrow b^{(t)}_i$. 
		\STATE $b_{A \setminus i}^{(t+1)} \leftarrow b_{A \setminus i}^{(t)}$, $s_F^{(t+1)} \leftarrow s_F^{(t)}$, $t \leftarrow t+1$.
		\ENDIF
	\ENDFOR
	\FOR{$k \in \N_{|F|}$}
		\STATE compute $LB_k^{(t)}$ and $UB_k^{(t)}$ by \eqref{eq:LB} and \eqref{eq:UB}.
		\STATE draw $(s_F)^\dagger_k \sim {\cal U}(LB_k^{(t)},UB_k^{(t)})$.
		\STATE $(s_F)_k^{(t+1)} \leftarrow (s_F)_k^{\dagger}$ with probability $\zeta$; otherwise $(s_F)_k^{(t+1)} \leftarrow (s_F)_k^{(t)}$.
			\item $b_{A}^{(t+1)} \leftarrow b_{A}^{(t)}$, $(s_F)_{-k}^{(t+1)} \leftarrow (s_F)_{-k}^{(t)}$, $t\leftarrow t+1$.
	\ENDFOR
\end{algorithmic}
\end{algorithm}

Putting the above pieces together we present the MH sampler in Algorithm \ref{alg:MH}. To highlight its dependency on $(\tdmu, \sigma, \lambda)$, we denote this algorithm by $MH(\tdmu,\sigma,\lambda)$.

\subsection{Examples}

Using a small dataset of size $(n,p) = (5,10)$, we compared our MH sampler with parametric bootstrap which provided the ground truth here. We estimated $\mu_0$ by $\tdmu = X\hbeta$, where $\hbeta$ is the lasso estimate. 
The active set chosen by the lasso was $A=\{6,9\}$. In parametric bootstrap, we simulated $y^*\sim\dnorm_n(\tdmu,\sigma^2\bfI_n)$ and found the lasso solution $\hbeta(y^*)$.
If the support of $\hbeta(y^*)$ was indeed $\{6,9\}$, the sample $\hbeta(y^*)$ would be accepted.
We ran this bootstrap method until we accepted $10,000$ samples whose active set $\calA(y^*)=A$.
This is essentially a naive rejection sampling method. 
Note that the bootstrap is only applicable for this small dataset. Even for such a small dataset, the number of bootstrap samples simulated in order to obtain 10,000 samples was $1.5 \times 10^5$, i.e., the acceptance rate was only $6.67\%$. This demonstrates the necessity of our MH sampler for this conditional sampling problem.
The MH sampler was then used to draw $20,000$ samples. See Figure \ref{fig:MCMCvalidity} for a comparison between the samples generated by the two methods. It can be seen from the scatter plots that the results of our MH sampler were very close to the exact samples generated by the bootstrap, providing a numerical validation of our algorithm.

\begin{figure}[!t]
	\centering
		\includegraphics[width=0.9\textwidth]{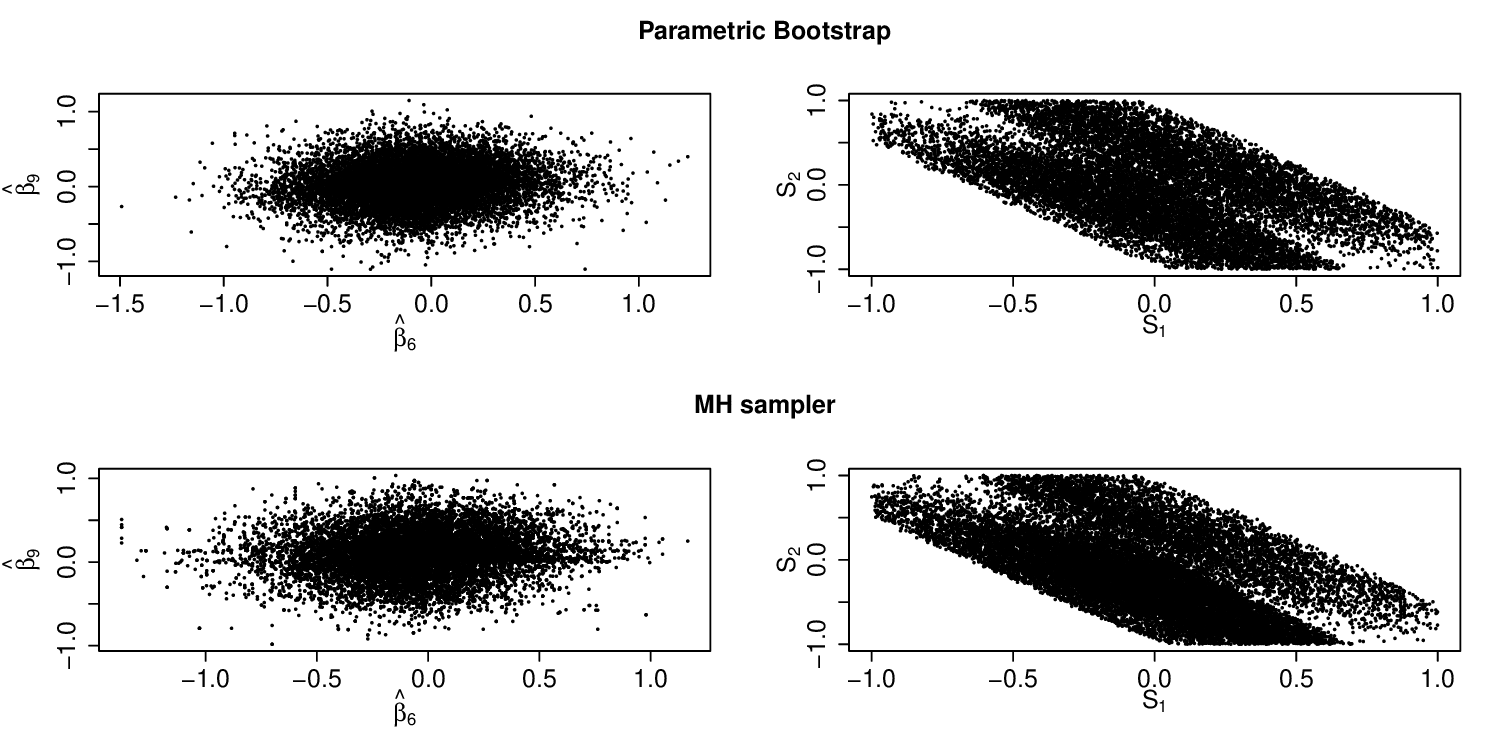}
\vspace*{-.2in}
	\caption{Comparison between bootstrap samples (top) and MH samples (bottom). The left column shows the scatter plot of $(\hbeta_6,\hbeta_9)$ while the right column shows the scatter plot of $(S_1, S_2)$. These are the first two components in $A$ and $I$, respectively. }
	\label{fig:MCMCvalidity}
\end{figure}

We present a quick visualization of the MH samples on a bigger dataset of size $(n,p)=(50,100)$. See Figure \ref{fig:ACF} for summary plots of the samples for the first two active coefficients, $\hbeta_1$ and $\hbeta_4$. The autocorrelation plots and the sample path plots show that our MH sampler was quite efficient with a fast decay in autocorrelation.

\begin{figure}[!t]
	\centering
		\includegraphics[width=0.9\textwidth]{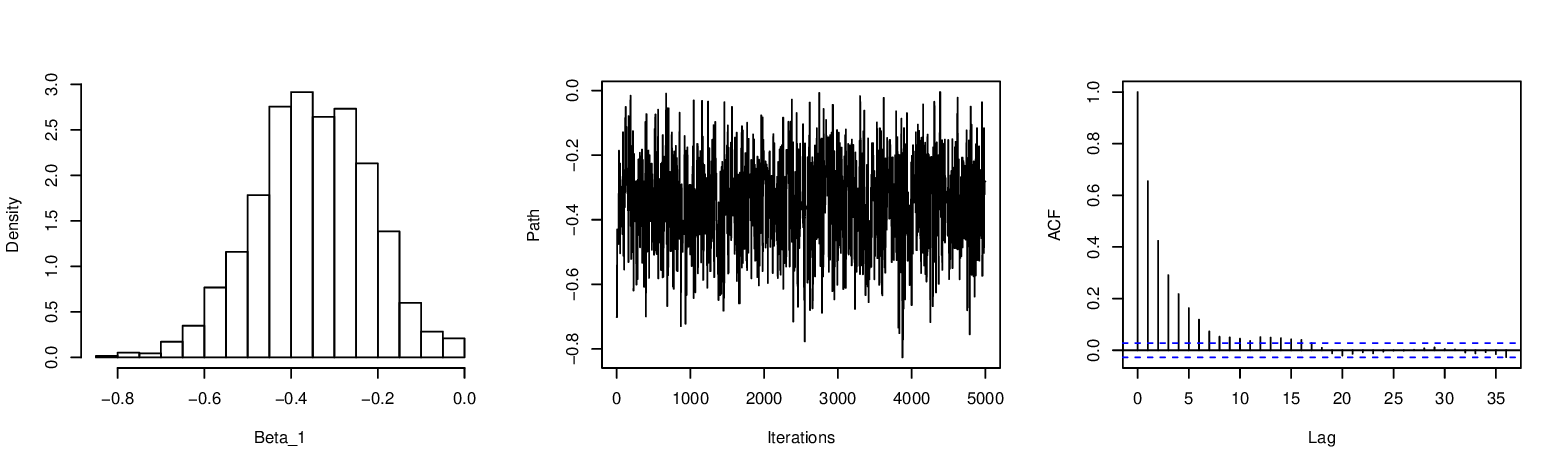}
				\includegraphics[width=0.9\textwidth]{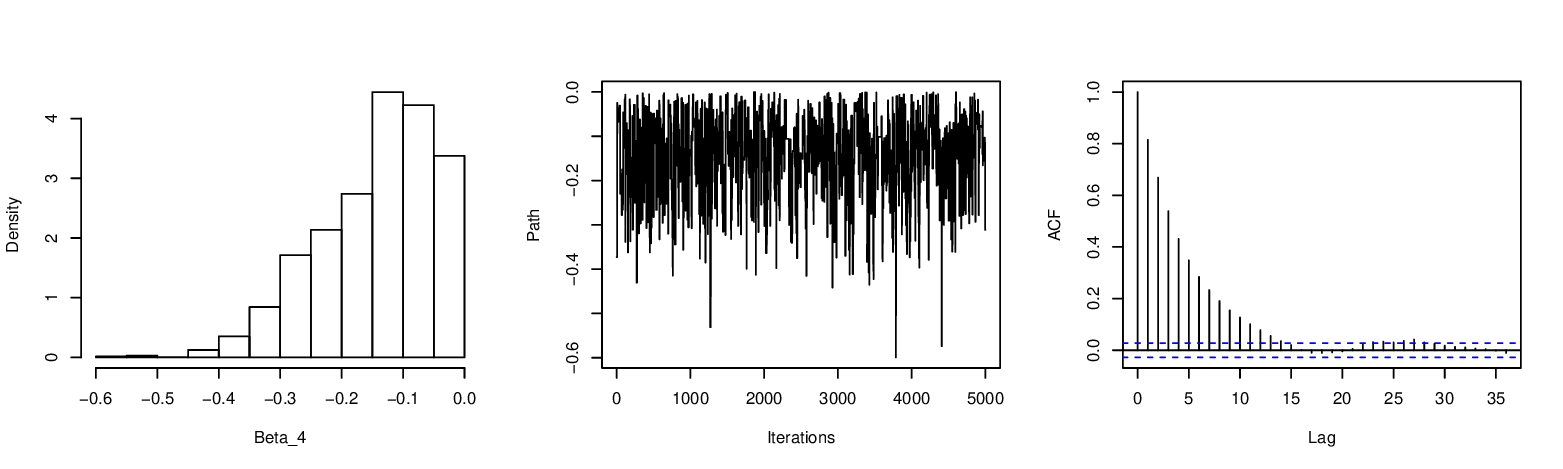}
\vspace*{-.2in}
	\caption{Summary plots for $\hbeta_1$ and $\hbeta_4$ from our MH sampler when $A = \{1,4,5\}$: histogram, sample trace, and autocorrelation function. }
	\label{fig:ACF}
\end{figure}

\section{Numerical results} \label{sec:numrslt}
In this section, we examine the performance of our method by providing simulation results under various settings.
In Section~\ref{sec:randsim}, we show the effectiveness of the proposed randomization of the plug-in estimate.   
Section~\ref{sec:senslbd} examines the robustness of our method with regard to the lasso tuning parameter $\lambda$. 
In Section~\ref{sec:leesim}, our confidence intervals are compared with those built by Lee's method. 
Section~\ref{sim:JS} provides simulation results for the construction of confidence sets by our method.
A detailed case study is presented in Section \ref{sec:caseStudy} to clarify the differences between our method and Lee's method.

\subsection{The effect of randomization}\label{sec:randsim}
To see the effect of our randomization step, we compare four different confidence intervals defined in Section \ref{sec:rand}:
\begin{itemize}
\compresslist
\item[(1)] $\xi_j(\mu_0) = [\hat \nu_j - q_{j, 1-\alpha/2}(\mu_0), \hat \nu_j - q_{j,\alpha/2}(\mu_0)]$ (oracle);
\item[(2)] $\xi_j (\hmu) = [\hat \nu_j -  q_{j,1-\alpha/2} (\hmu),\hat \nu_j -  q_{j,\alpha/2} (\hmu)],~ \hmu \in \{\hmu_A, \hmu_S\}$;
\item[(3)] $\xi_j (\wh C) = [\hat \nu_j - q_{j,1-\alpha/4} (\wh C),\hat \nu_j -  q_{j,\alpha/4} (\wh C)], ~\wh C \in \{\wh C_A, \wh C_S\}$;
\item[(4)] $\xi_j^*(\wh C) = [\hat \nu_j - q^*_{j,1-\alpha/4} (\wh C), \hat \nu_j -  q^*_{j, \alpha/4} (\wh C)], ~\wh C \in \{\wh C_A, \wh C_S\}$.
\end{itemize}
Recall that $\hat \nu = X_A^+ y$ and see \eqref{eq:CImu}, \eqref{eq:defupmax}, \eqref{eq:deflowmax} and \eqref{eq:qhatC} for the definitions of $q_{j,\gamma}(\mu)$, $q_{j,\gamma}(\wh C)$ and $q_{j,\gamma}^*(\wh C)$. 
As we described in Section \ref{sec:rand}, there are two ways of constructing $\wh C$, with center $\hmu$. The subscripts $A$ and $S$ are used to distinguish the two methods. Again, for intervals (3) and (4),  $(\alpha/4, 1-\alpha/4)$-quantiles are used due to Bonferroni correction. 

Algorithm \ref{alg:CI} is used to construct interval (3). Likewise, for interval (4), we draw $\big\{\tdmu^{(i)}\big\}_{i=1}^K$ from $\partial \wh C$ and estimate $q_{j,\alpha/4}^*(\wh C)$ and $q_{j,1-\alpha/4}^*(\wh C)$ by
\begin{align*}
{q}_{j,\alpha/4}^*(\wh C)=\min_{1 \leq i \leq K}q_{j,\alpha/4}(\tdmu^{(i)}), \quad\quad
{q}_{j,1-\alpha/4}^*(\wh C)=\max_{1 \leq i \leq K}q_{j,1-\alpha/4}(\tdmu^{(i)}).
\end{align*}
We set $K = 20$, and given each $\tdmu^{(i)}$, we sampled 500 $y^*$'s, i.e. the total number of samples used for intervals (3) and (4) was $20 \times 500=10,000$. 
For a fair comparison, we fixed the number of samples to be $10,000$ for (1) and (2). 
Note that our MH sampler was used in all the four methods to draw from $[y^* \mid \calA(y^*) = A]$ and estimate the quantiles $q_{j,\gamma}(\mu)$.
Twenty datasets with $(n,p,A_0) = (50,100,\mathbb{N}_5)$ were simulated. The true coefficients $\beta_{0A_0}$ were drawn from $\mathcal{U}(-1,1)$. Each row of $X$ was independently sampled from $\calN_p(0, \Sigma)$. We considered three types of covariance matrix $\Sigma$ in this comparison:
\begin{itemize}
\item Identity (I): 
$
{ \Sigma} = {\bf I}_p,
$
\item Toeplitz (T): 
$
{ \Sigma}_{ij} = 0.5^{|i-j|},
$
\item Exponential Decay (ED):
$
{ \Sigma}_{ij}^{-1} = 0.4^{|i-j|}.
$
\end{itemize}
The significance level $\alpha$ was set to $0.05$ and $\sigma^2=1$ was assumed to be known. For each dataset, $\lambda$ was chosen by cross-validation with the one standard error rule. 

The following metrics are used to compare the results. For a subset $E \subset \N_p$ and confidence intervals, $\wh I_j$ for $j \in A$, we define power and coverage by averaging over the variables in the set $E$:
\begin{itemize}
\item[] Power $= \sum_{j\in E} \mathbb{P} \Big(0 \notin \wh I_j\Big) / |E|$,
\item[] Coverage $= \sum_{j\in E} \mathbb{P} \Big( \nu_j \in \wh I_j\Big) / |E|$.
\end{itemize}
A few informative choices for $E$ are $A$, $A_0 \cap A$ and $A_0^c \cap A$.
The set $A$ includes all the variables selected by lasso, while the sets $A_0 \cap A$ and $A_0^c \cap A$ separate the true positive and the false positive variables. We report in Table \ref{tb:qexp} 
the average coverage rate over variables
in each of the three sets and the power of detecting true positive variables $A_0\cap A$ for each of the four methods. 
We omit the subscript $j$ to simplify our notation and to indicate averaging over a subset of indices, such as $j\in A$.

The coverage rate of $\xi(\mu_0)$ was at the desired level while its average length was the shortest among all the methods. This is an obvious result, since the true parameter $\mu_0$ is assumed to be known (the oracle). 
Using a single plug-in estimate, $\xi(\hmu)$ produced shorter confidence intervals (CIs) compared to $\xi(\wh C)$ and $\xi^*(\wh C)$. However,
the coverage rate of $\xi(\hmu)$ was much lower than the nominal level, especially for $j \in A_0^c \cap A$. On the contrary, with randomized $\tdmu$ drawn from the confidence set $\wh C$, the CIs of $\xi(\wh C)$ achieved the desired coverage rate, which demonstrates the importance of our proposed randomization step. The intervals $\xi^*(\wh C)$ showed a similar effect as $\xi(\wh C)$, but they turned out to be the most conservative with overall coverage rates close to 1 and the longest interval lengths. In particular, for the set $A_0^c \cap A$ the average length of $\xi^*(\wh C_A)$ was much longer than the length of $\xi(\wh C_A)$. 

\begin{table}[!t]

\caption{Power and coverage rate for (1) $\xi(\mu_0)$ (oracle), (2) $\xi(\hmu)$, (3) $\xi(\wh C)$ and (4) $\xi^*(\wh C)$.} 
\begin{center}
\vspace*{.05in}
\begin{tabular}{llllll}
 \hline 
    \multirow{2}{*}{$\Sigma$} & \multirow{2}{*}{Method} &Power &  & Coverage &  \\ 
\cline{4-6}
&& $A_0 \cap A$ & $A$ & ${A_0} \cap A$ & $ A_0^c \cap A$ \\
  \hline
I			 		& (1)			& 1.000 & 0.960(0.379) & 0.960(0.471) & 0.962(0.202) \vspace{.02in} \\ 
					& (2$_A$)	& 0.980 & 0.882(0.493) & 0.880(0.555) & 0.885(0.373) \\ 
					& (3$_A$)	& 0.760 & 0.934(0.751) & 0.900(0.857) & 1.000(0.546) \\ 
					& (4$_A$)	& 0.800 & 1.000(0.881) & 1.000(0.874) & 1.000(0.895) \vspace{.02in} \\ 
					& (2$_S$)	& 0.880 & 0.618(0.479) & 0.600(0.520) & 0.654(0.401) \\ 
					& (3$_S$)	& 0.580 & 0.934(0.850) & 0.900(0.973) & 1.000(0.613) \\ 
					& (4$_S$)	& 0.740 & 1.000(0.975) & 1.000(0.980) & 1.000(0.965) \vspace{.05in} \\ 
T			 		& (1)			& 0.978 & 0.955(0.426) & 0.956(0.542) & 0.954(0.191) \vspace{.02in} \\ 
					& (2$_A$)	& 0.978 & 0.821(0.562) & 0.933(0.634) & 0.591(0.416) \\ 
					& (3$_A$)	& 0.711 & 0.970(0.852) & 0.956(0.963) & 1.000(0.624) \\ 
					& (4$_A$)	& 0.800 & 0.985(0.945) & 1.000(0.966) & 0.954(0.902) \vspace{.02in} \\
					& (2$_S$)	& 0.800 & 0.642(0.537) & 0.667(0.593) & 0.591(0.421) \\ 
					& (3$_S$)	& 0.489 & 0.970(0.962) & 0.956(1.106) & 1.000(0.668) \\ 
					& (4$_S$)	& 0.689 & 0.985(1.044) & 1.000(1.077) & 0.954(0.979) \vspace{.05in} \\ 
ED			 	& (1)			& 0.967 & 0.957(0.361) & 0.951(0.456) & 0.969(0.180) \vspace{.02in} \\ 
					& (2$_A$)	& 0.934 & 0.914(0.451) & 0.918(0.521) & 0.906(0.318) \\ 
					& (3$_A$)	& 0.754 & 0.957(0.678) & 0.934(0.802) & 1.000(0.441) \\ 
					& (4$_A$)	& 0.721 & 1.000(0.816) & 1.000(0.831) & 1.000(0.785) \vspace{.02in} \\ 
					& (2$_S$)	& 0.869 & 0.774(0.427) & 0.721(0.472) & 0.875(0.340) \\ 
					& (3$_S$)	& 0.639 & 0.957(0.752) & 0.934(0.879) & 1.000(0.511) \\ 
					& (4$_S$)	& 0.688 & 1.000(0.901) & 1.000(0.908) & 1.000(0.889) \vspace{.05in} \\ 
\hline
\end{tabular}
\label{tb:qexp}
\vspace*{-.1in}
\end{center}

\small{The subscripts $A$ and $S$ indicate two ways of estimating $\wh C$ and its center $\hmu$. The average length of confidence intervals is reported in the parentheses.}
\end{table}

Between the two ways of constructing $\wh C$, $\xi(\wh C_A)$ and $\xi(\wh C_S)$ had the same coverage rates. However, we observe that the average length of $\xi(\wh C_S)$ was longer than that of $\xi(\wh C_A)$, which reduced its power. Therefore, in the following numerical results, we choose to use $\xi(\wh C_A)$ only. See related discussion in Section \ref{sec:JointInference}.

\begin{figure}[!t]
	\centering
	\rotatebox{-90}{
		\includegraphics[width=.45\textwidth]{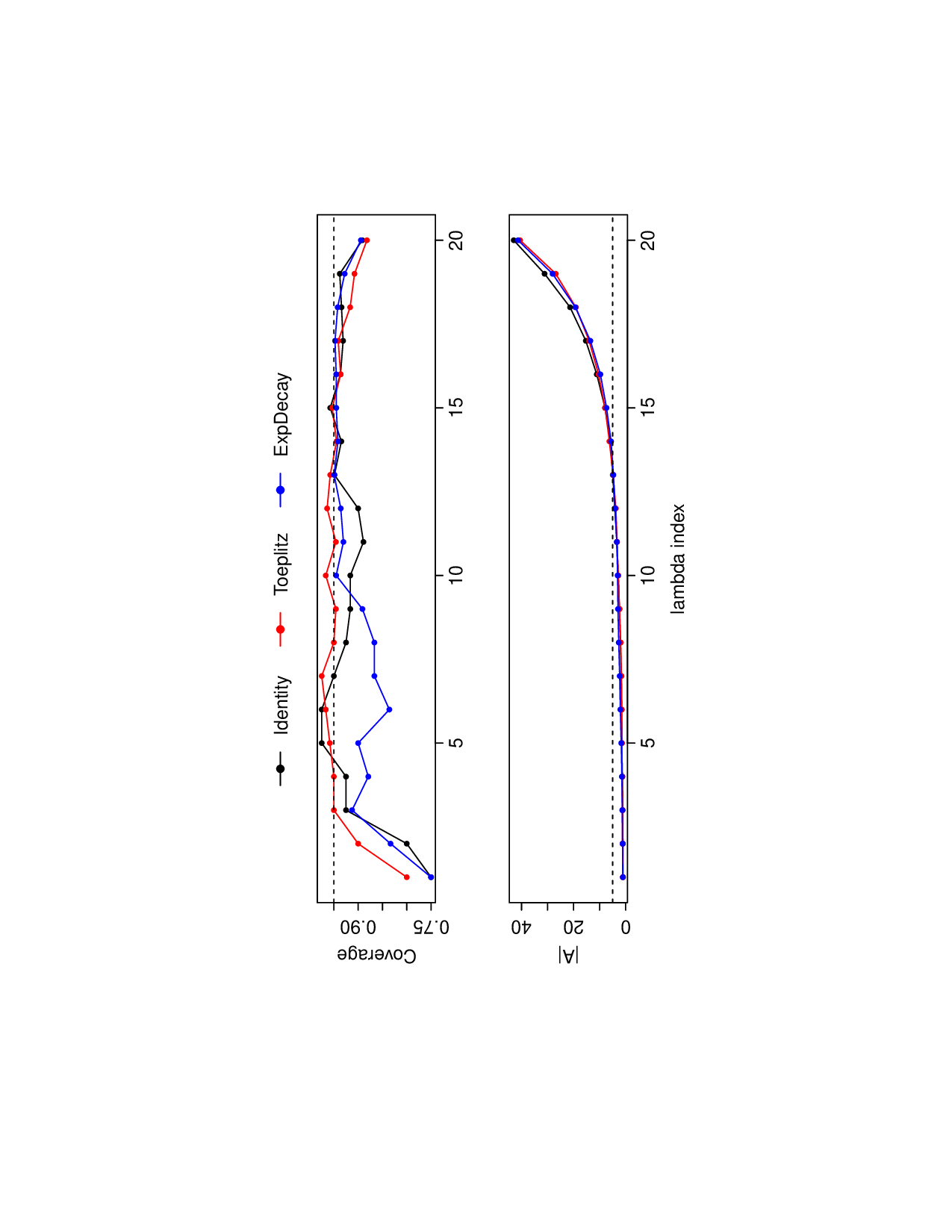}
	} 
	\caption{Sensitivity test for the choice of $\lambda$ with the datasets of $(n,p,A_0)=(50,100,\mathbb{N}_5)$. Each point is the average from 20 datasets.} \label{fig:lbdsens}
\end{figure}

\subsection{Sensitivity to $\lambda$}\label{sec:senslbd}

Using the same datasets from Section \ref{sec:randsim}, we ran more tests to examine how sensitive the coverage of ${\xi}(\wh C_A)$ is to $\lambda$, the tuning parameter of the lasso. We chose 
20 $\lambda$ values, equally spaced between $\|X^\trans y\|_\infty / n$ and $0$. Figure~\ref{fig:lbdsens} plots the coverage rate and the size of the active set $q = |A|$ against the index $i_\lambda$ of $\lambda$. Note that the $\lambda$ sequence is in decreasing order so that $q$ increases with $i_\lambda$. Each point in the plot is the average of 20 datasets. The coverage rate of ${\xi}(\wh C_A)$ was preserved around the desired level, indicated by the dashed line in the top panel, when the lasso active set was not extremely small or large. In fact, the coverage rate was well maintained around $95\%$ for $2 \leq q \leq 30$ $(7 \leq i_\lambda \leq19)$, while the size of the true active set $|A_0|= 5$ (the dashed line in the bottom panel). This shows that our method works well for a wide range of models selected by lasso.
The coverage rate was a little more sensitive to the choice of $\lambda$ for the exponential decay design than the other two designs. However, even for that case, when the size of active set $q\geq 3$ $(i_\lambda \geq 10)$, the coverage rate stayed around the desired level.

 One might worry about the low coverage rates for the first few and the last $\lambda$ values. However, these $\lambda$'s are either too small or too big to be chosen in practice. Recall that we choose $\lambda$ by cross-validation with the one standard error rule, denoted by $\lambda_{1se}$. The 5\% and 95\% percentiles of $i_{\lambda_{1se}}$ were 10.21 and 17.35, between which the performance of our method is seen to be very stable (Figure \ref{fig:lbdsens}). This analysis confirms the notion that our inference tool may be used in conjunction with a data-dependent turning of lasso to quantify the significance of a quite large set of selected features, as discussed in Remark~\ref{rmk:lbd}.

\subsection{Comparison on individual intervals}\label{sec:leesim}

\begin{table}[!t]
\begin{center}
\caption{Comparision between (a) $\xi (\wh C_A)$ and (b) Lee's method.}\label{tb:leecomparison1}
\vspace*{.1in}
\small{
\begin{tabular}{llllllllll}
 \hline 
\multirow{2}{*}{$(n,p)$} & \multirow{2}{*}{$A_0$} & \multirow{2}{*}{$\Sigma$} & \multirow{2}{*}{Method} &Power &  & Coverage &  & \multirow{2}{*}{$\mathbb{P}_\infty$} \\ 
\cline{6-8}
&&&& $A_0 \cap A$ & $A$ & ${A_0} \cap A$ & $ A_0^c \cap A$ \\
 \hline
$(100, 200)$		& $A_0^{(1)}$	& I 			& (a)		&	0.870 & 0.968(0.555) & 0.956(0.673) & 0.975(0.486) \\  
						&						& 				& (b) 	&	0.783 & 0.968(1.819) & 1.000(0.998) & 0.949(2.310)  &	1.6\%\\ 
						&						& T			& (a)		& 	0.836 & 0.953(0.590) & 0.934(0.721) & 0.978(0.416) \\ 
						&			 			& 		 		& (b)		&	0.820 & 0.944(1.562) & 0.951(0.882) & 0.935(2.617) &	9.3\%\\ 
						&						& ED			& (a)		&	0.870 & 0.931(0.501) & 0.896(0.606) & 0.981(0.349) \\ 
						&						& 				& (b)		&	0.818 & 0.946(1.251) & 0.922(0.824) & 0.981(1.887)  &	2.3\%\\
						&						& EC		   	& (a)		&	0.643 & 0.846(0.612) & 0.768(0.760) & 0.892(0.523) \\ 
						&						& 			   	& (b)		&	0.500 & 0.973(5.178) & 0.964(3.513) & 0.978(6.288) & 9.4\% \vspace*{.05in}\\ 		
						& $A_0^{(2)}$ 	&T			& (a)		&	0.911 & 0.914(0.539) & 0.889(0.657) & 0.928(0.476)  \\ 
						&			 			& 		 		& (b) 	&	0.822 & 0.969(1.618) & 0.956(0.994) & 0.976(1.984)  &	4.7\%\\ 
						&						& ED			& (a)		&	0.889 & 0.948(0.460) & 0.933(0.574) & 0.956(0.403)  \\  
						&						& 				& (b) 	&	0.822 & 0.993(2.733) & 0.978(1.301) & 1.000(3.457)  &	3.7\% \vspace*{.05in}\\ 	
$(200, 400)$		& $A_0^{(1)}$	& I 			& (a)		&	0.923 & 0.954(0.433) & 0.949(0.489) & 0.967(0.286) \\ 
						&						& 				& (b)		&	0.872 & 0.972(0.728) & 0.962(0.376) & 1.000(1.645)  &	0.0\%\\ 
						&						& T			& (a)		&	0.897 & 0.933(0.451) & 0.926(0.539) & 0.946(0.288) \\ 
						&			 			& 		 		& (b)		&	0.868 & 0.933(1.154) & 0.912(0.787) & 0.973(1.887)  &	2.9\%\\ 
						&						& ED			& (a)		&	0.909 & 0.928(0.362) & 0.896(0.468) & 0.968(0.230) \\ 
						&						& 				& (b)		&	0.779 & 0.978(1.443) & 0.987(0.956) & 0.968(2.127)  &	6.5\%\\ 		
						&						& EC		   	& (a)		&	0.841 & 0.926(0.492) & 0.921(0.671) & 0.929(0.379) \\ 
						&						& 			   	& (b)		&	0.714 & 0.975(3.481) & 0.984(2.078) & 0.970(4.422) & 6.2\% \vspace*{.05in}\\ 		
						& $A_0^{(2)}$ 	&T			& (a)		&	1.000 & 0.991(0.441) & 1.000(0.512) & 0.987(0.407) \\  
						&			 			& 		 		& (b) 	&	0.917 & 0.928(0.929) & 0.972(0.731) & 0.907(1.043)  &	13.5\%\\ 
						&						& ED			& (a)		&	0.971 & 0.927(0.383) & 0.914(0.441) & 0.932(0.355) \\  
						&						& 				& (b) 	&	0.857 & 0.972(0.945) & 0.943(0.510) & 0.986(1.162)  &	3.7\%\\ 		
\hline
\end{tabular}
}
\end{center}
\small{The numbers in the parentheses are the average length of the confidence intervals. For Lee's method, only finite intervals are used to compute the average length and $\mathbb{P}_\infty$ is the proportion of excluded infinite intervals.} 
\end{table}

In this section, $\xi(\wh C_A)$ is compared with Lee's method \citep{lee:etal:16} implemented in the \proglang{R} package {\bf selectiveInference}.
Datasets were simulated in the same way as in Section \ref{sec:randsim} but with two larger sizes,
$(n,p) \in \{(100,200), (200,400) \}$, and one more type of design matrix
\begin{itemize}
\item Equicorrelation (EC): 
${ \Sigma}_{ii} = 1$ and ${\Sigma}_{ij} = .7 ~(i \neq j)$.
\end{itemize}
Note that the correlation among predictors was the highest under this design. 
We also considered two different ways of placing true active coefficients. In the first case, the true active coefficients were placed together, i.e. $A_0^{(1)} = \{1, \cdots, 5\}$. In the second case, they were evenly spaced out, i.e. $A_0^{(2)} = \{1, p/5 +1, \cdots, 4p/5 + 1\}$.
The true active covariates were highly correlated with each other in the first case, while they were more correlated with other inactive covariates in the second case. See Table~\ref{tb:leecomparison1} for the comparison results. Note that the designs Identity and Equicorrelation were considered only with $A_0^{(1)}$, since the two ways of assigning $A_0$ are equivalent for these two designs.

The coverage rate of Lee's method was well-preserved at the nominal level, $1-\alpha$, in most cases. 
However, their method sometimes generated very wide or even infinite intervals with $\infty$ or $-\infty$ as the upper or lower bound. This happens when the conditional distribution $[(X_A^+ y)_j \mid \calA(y), (X_A^+ y)_{-j}]$ is truncated to a union of bounded intervals and the observed value of $(X_A^+ y)_j$ is close to one of its boundaries \citep{kiva:leeb:18}. See Section~\ref{sec:caseStudy} for a case study that exemplifies this issue.
The last column in Table \ref{tb:leecomparison1} reports the proportion of infinite intervals estimated by Lee's method. For example,  when $(n,p) = (200,400)$ and $A_0=A_0^{(2)}$ with the Toeplitz design, the proportion of infinite confidence intervals was $13.5\%$. The chance of encountering such an issue was already quite high but it would be even higher if we increased the confidence level. 

On the other hand, for most settings, our confidence intervals ${\xi}(\wh C_A)$ succeed to stay at the desired level while having much shorter average length than the intervals by Lee's method. For every setting except the case of $(n,p,A_0,\Sigma)=(100,200,A_0^{(1)},\text{EC})$, our coverage rates averaging over all $j \in A$ were higher than $0.9$ and very close to $0.95$. 
The average length of our intervals was uniformly shorter than that of Lee's method (after excluding infinite ones). The difference in the interval length was especially significant for the coefficients in $A \cap A_0^c$ and for the equicorrelation designs. 
For example, in Table~\ref{tb:leecomparison1} when $(n,p) = (200,400)$ and $A_0=A_0^{(1)}$ with the equicorrelation design, the average length of ${\xi}(\wh C_A)$ was 0.492, while the average length from Lee's method was $3.481$. This is extremely long considering the fact that we drew $\beta_{0j}$ from $\calU(-1,1)$ for $j \in A_0$. These long intervals failed to detect significant coefficients and thus resulted in a low power. 

\begin{figure}[!p]
	\centering
		\includegraphics[width=0.7\textwidth]{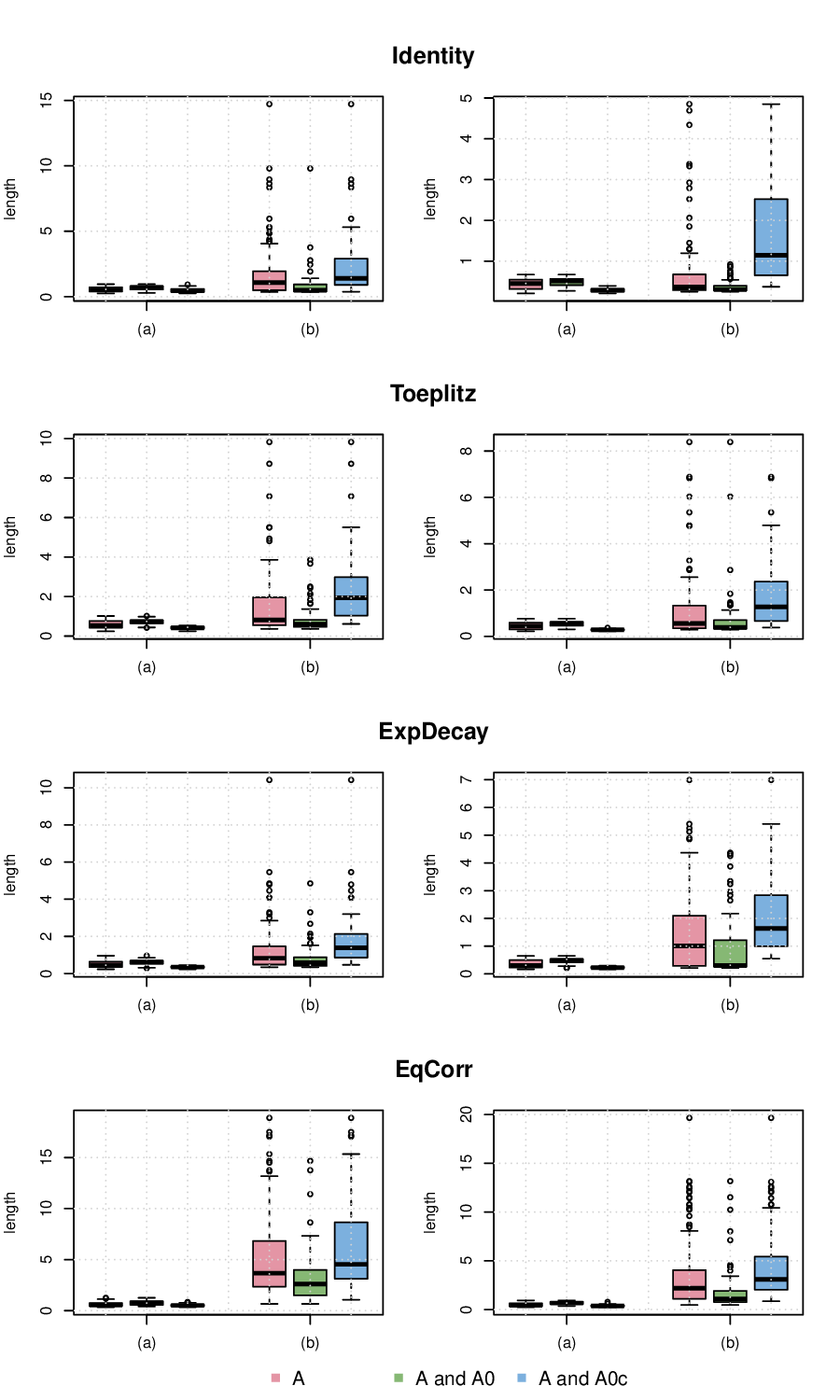}
	\caption{Comparison between (a) ${\xi}(\wh C_A)$ and (b) Lee's method when $A_0 = \{1, \cdots ,5\}$. The left and right columns are for $(n,p)=(100,200)$ and $(200,400)$, respectively. \label{fig:lengthBoxplotLee1}}
\end{figure}

In Figure \ref{fig:lengthBoxplotLee1}, we show box-plots of the interval lengths from the 20 datasets in each design with $A_0 = A_0^{(1)}$ for a closer view. The plots for the cases of $A_0 = A_0^{(2)}$ are almost identical and thus omitted here for brevity. Each box-plot reports the interval lengths for all variables in $A$, $A\cap A_0$ or $A\cap A_0^c$.
Consistent with the results from Table \ref{tb:leecomparison1}, the interval lengths of Lee's method were much larger than those of our method. In particular, under the equicorrelation design, the maximum length of our intervals was even smaller than the first-quartile length of Lee's method for all three sets of variables.
The length of our intervals is also much less variable than that of Lee's method for every case in the figure, which shows that our method is more consistent across different datasets. 
One can easily see that Lee's method produced a number of lengthy intervals, represented as isolated dots or outliers in a box-plot. These intervals are not informative at all.
Lastly, the difference between the two methods is most drastic for the set $A \cap A_0^c$, where the intervals from Lee's method can be 10 times longer than ours.
Note that by removing all the infinite intervals output by Lee's method from these plots, this comparison favors Lee's method.

\subsection{Comparison on joint confidence sets}\label{sim:JS}

We conducted further experiments to examine the performance of our method in constructing confidence sets for $\nu$. We generated results under three types of design, $\Sigma \in \{\text{T}, \text{ED}, \text{EC}\}$, and two data sizes, $(n,p) \in \{(100,200), (200,400)\}$, with $A_0$ fixed to $\{1, \ldots, 5\}$. 
Under each of these six settings, the same 20 datasets as in Section \ref{sec:leesim} were used. 
First, we constructed confidence sets $\xi_{[e_i, e_j]^\trans}(\wh C; \ell_\delta)$ \eqref{eq:JointqhatC} for each pair $(\nu_i, \nu_j)$, $i \neq j$, with $\ell_2$ norm and $\ell_\infty$ norm, i.e. $\delta\in\{2,\infty\}$. Then, we moved to confidence sets $\xi_{\mathbf{I}_q}(\wh C; \ell_\delta)$, $q=|A|$, for joint inference on $\nu$, again using the two norms, $\ell_2$ and $\ell_\infty$. Consequently, the confidence sets were either a sphere or a hypercube in $\R^q$. As we are not aware of any method specifically designed for joint inference after lasso selection, we compared our results with Lee's method using multiple test adjustment. To build a $1-\alpha$ confidence set for $\nu_B$, $|B|=d$, individual intervals $\wh I_k$, $k\in B$, were constructed  by Lee's method with an adjusted confidence level $1-\alp/d$, and then a confidence set was constructed as the Cartesian product of $\wh I_k, k\in B$.

The following metrics are used to evaluate constructed confidence sets. Recall $A$ is the set of selected variables and $q=|A|$.  For a positive integer $m$, let ${\cal B}(m) = \{B : B \subset \N_q, ~|B| = m\}$ index all size-$m$ subsets of $A$. We define coverage and power by averaging over sets of variables in ${\cal B}(m)$:
\begin{itemize}
\item[] Coverage $= \sum_{B\in {\cal B}(m)} \mathbb{P} \Big( \nu_B \in \xi_{e_B^\trans}(\wh C)\Big) / |{\cal B}(m)|$,
\item[] Power $= \sum_{B\in {\cal B}(m)} \mathbb{P} \Big(0 \notin \xi_{e_B^\trans}(\wh C)\Big) / |{\cal B}(m)|$.
\end{itemize}
For example, when considering $\xi_{\mathbf{I}_q}(\wh C)$, $m=q$ and $|{\cal B}(q)| = 1$.  For pairwise confidence sets, $m=2$ and $|{\cal B}(2)|=q(q-1)/2$. The volume and the diameter of a confidence set were recorded for comparison as well, where the diameter is defined as the maximum distance between any two points in the set. In short, a confidence set has a better performance if it has a higher power and a smaller volume or diameter, while achieving the nominal coverage rate.

\begin{table}[!t]
\caption{Coverage, power, and size of pairwise confidence sets}\label{tb:PairwiseqhatC}
\begin{center}
\begin{tabular}{lllllllll}
 \hline 
\multirow{2}{*}{} & {\multirow{2}{*}{Method}}  & \multicolumn{3}{c}{$(n,p) =(100,200)$}   & \multicolumn{3}{c}{$(n,p) =(200,400)$}  &  {\multirow{2}{*}{Overall}}  \\	 
& & T 	& ED & EC & T 	& ED & EC \\	 \hline 
{Coverage} 
& $\xi(\wh C; \ell_2)$ 		& 0.941	 &	0.971 &	0.965 &	0.912  & 0.912	& 0.937	 & 0.940 \\ 
& $\xi(\wh C; \ell_\infty)$	& 0.972 & 0.984 & 0.970 & 0.950 & 0.962 & 0.944 & 0.964 \\  \vspace*{.05in}
& Lee							& 0.924 & 0.973 & 0.997 & 0.972 & 0.989 & 0.992 & 0.975 \\  
{Power} 
& $\xi(\wh C; \ell_2)$ 		& 1.000 	& 0.983 	& 0.831 	& 1.000 	& 1.000 	& 0.986 & 0.967  \\ 
& $\xi(\wh C; \ell_\infty)$	& 0.986 & 0.975 & 0.769 & 1.000 & 0.992 & 0.986 & 0.951\\ \vspace*{.05in}
& Lee							& 0.718 & 0.558 & 0.138 & 0.736 & 0.558 & 0.568 & 0.546 \\ 
{Diameter} 
& $\xi(\wh C; \ell_2)$ 		& 0.885		&	0.734	&	1.045	&	0.644	&	0.488	&	0.792	& 0.765\\
& $\xi(\wh C; \ell_\infty)$	& 1.173 & 0.972 & 1.384 & 0.850 & 0.653 & 1.065 & 1.016 \\ \vspace*{.05in}
& Lee							& 2.854 & 2.301 & 10.221 & 1.935 & 3.493 & 6.251 & 4.509\\ 
{Volume} 
& $\xi(\wh C; \ell_2)$ 		& 0.627 & 0.431 & 0.883 & 0.331 & 0.193 & 0.521 & 0.498\\ 
& $\xi(\wh C; \ell_\infty)$	& 0.699 & 0.481 & 0.986 & 0.366 & 0.220 & 0.601 & 0.559 \\ \vspace*{.05in}
& Lee$$							& 4.181 & 2.822 & 50.299 & 1.861 & 5.818 & 20.843 & 14.304 \\ 
{$\mathbb{P}_\infty$} 
& Lee			& 0.201 & 0.058 & 0.229 & 0.117 & 0.171 & 0.147 & 0.154 \\ 
\hline
\end{tabular}
\vspace*{-.1in}
\end{center}
\small{Note: For Lee's method, only finite sets are used to compute the average diameter and volume, and $\mathbb{P}_\infty$ reports the proportion of excluded infinite sets.} 
\end{table}

Table~\ref{tb:PairwiseqhatC} reports the comparison on pairwise confidence sets for each simulation setting,
where the last column reports the overall averages across all six settings.
While the coverage rates for all confident sets were close to the desirable level, the volume and the diameter of Lee's method were often much larger than our confidence sets. For example, when compared to $\xi(\wh C; \ell_2)$ under $(n,p,\Sigma)=(100,200,\text{EC})$, the average diameter and the average volume of Lee's method were 10 and 55 times larger, respectively.  
To compare the power, we restricted to those pairs $(i,j)$ for which both variables $X_i$ and $X_j$ were in the true support $A_0$, i.e. $i,j\in A_0 \cap A$. As our method built smaller confidence sets, it is not surprising to see that our confidence sets showed a much higher power for all data settings. 

\begin{table}[!t]
\caption{Coverage, power and size of joint confidence sets}
\label{tb:JointqhatC}
\begin{center}
\begin{tabular}{lllllllll}
 \hline 
\multirow{2}{*}{} & {\multirow{2}{*}{Method}}  & \multicolumn{3}{c}{$(n,p) =(100,200)$}   & \multicolumn{3}{c}{$(n,p) =(200,400)$}   & {\multirow{2}{*}{Overall}}\\	 
& & T 	& ED & EC & T 	& ED & EC \\	  \hline  
Coverage & {$\xi_{\mathbf{I}_{q}}(\wh C; \ell_2)$}	 
& 0.900		& 1.000 	& 0.950 	& 0.850		& 0.900		& 0.950 & 0.925 \\ 
& {$\xi_{\mathbf{I}_{q}}(\wh C; \ell_\infty)$}
& 1.000		& 1.000 	& 0.950 	& 1.000		& 1.000		& 1.000 & 0.992\\ \vspace*{.05in}
& Lee
& 0.950		& 0.950 	& 0.950 	& 0.800		& 1.000		& 1.000 & 0.942 \\  
Power & {$\xi_{\mathbf{I}_{q}}(\wh C; \ell_2)$}	 
& 1.000 	& 1.000 	& 0.900 	& 1.000 	& 1.000 	& 1.000 & 0.983\\
& {$\xi_{\mathbf{I}_{q}}(\wh C; \ell_\infty)$}
& 1.000 	& 1.000 	& 0.650 	& 1.000 	& 1.000 	& 1.000 & 0.942 \\ \vspace*{.05in} 
& Lee
& 0.250 	& 0.150 	& 0.050 	& 0.250 	& 0.450 	& 0.050 & 0.200 \\ 
Diameter & {$\xi_{\mathbf{I}_{q}}(\wh C; \ell_2)$}	 
& 1.192		& 	1.093	& 1.509		& 0.891		& 0.785		& 1.324 & 1.132 \\
& {$\xi_{\mathbf{I}_{q}}(\wh C; \ell_\infty)$}
& 2.179		& 	2.080	& 3.101		& 1.576		& 	1.500	& 2.831 & 2.211\\ \vspace*{.05in}
& Lee
& 3.362		& 	3.602	& 8.321		& 2.262		& 	1.188	& 6.471 & 4.201\\ 
Volume$^*$ & {$\xi_{\mathbf{I}_{q}}(\wh C; \ell_2)$}	 
& 0.826 & 0.719 & 0.946 & 0.627 & 0.530 & 0.812 & 0.743 \\
& {$\xi_{\mathbf{I}_{q}}(\wh C; \ell_\infty)$}
& 0.958		& 	0.838	& 1.148		& 0.707		& 0.613		& 1.011 & 0.879\\ \vspace*{.05in}
& Lee
& 1.013 & 1.084 & 3.201 & 0.699 & 0.474 & 1.865 & 1.389\\ 
$\mathbb{P}_{\infty}$ & Lee
& 0.300 	& 0.150		& 0.450		& 0.200		& 0.300		& 0.400 & 0.300 \\ \hline
\end{tabular}
\label{tb:JointCI}
\vspace*{-.1in}
\end{center}
\small{Note: Volume$^*$ is the normalized volume. For Lee's method, only finite sets are used to compute the average diameter and volume, and $\mathbb{P}_\infty$ reports the proportion of excluded datasets due to infinite volumes or infinite diameters.} 
\end{table}

Table~\ref{tb:JointqhatC} reports the results of joint confidence sets for $\nu$, i.e. $H  = \mathbf{I}_q$ in \eqref{eq:JointqhatC}. 
Since an average volume can be highly influenced by a few datasets with large active sets, i.e. large $|A|$, we normalized each volume by the size of $A$ to calculate Volume$^*$ = Volume$^{1/|A|}$ before averaging over datasets. As seen from the table, while the coverage rates of $\xi_{\mathbf{I}_q}(\wh C; \ell_2)$, $\xi_{\mathbf{I}_q}(\wh C; \ell_\infty)$ and Lee's method all stayed around the desirable level, the average diameter and the average volume of Lee's method were larger than ours. In particular, for the equicorrelation designs (EC), the average diameter and normalized volume of Lee's method were, respectively, more than 4 and 2 times larger than those of $\xi_{\mathbf{I}_q}(\wh C; \ell_2)$. When $(n,p,\Sigma) = (200,400,\text{ED})$, we observe that the average volume of Lee's method was smaller than that of ours. This is because we only used datasets for which Lee's method did not produce any infinite intervals when computing the diameters and volumes for their method, which clearly underestimated the average size of their confidence sets. More extreme  differences are seen when comparing the power of a confidence set. While the average power of our method was close to one for most cases, the average power of Lee's method  was only around 0.20, which demonstrates the advantage of our method. The issue of producing infinite confidence sets by Lee's method was even more severe for joint inference, as expected. For $(n,p,\Sigma) = (100,200,\text{EC})$, their method generated infinite intervals for almost half of the datasets, which would be problematic in practical applications.

Overall, our confidence sets were able to achieve the nominal coverage rate with a high power and a small diameter. Our current implementation constructs either a sphere or a hypercube centered at $H\hat{\nu}$ as a confidence set. One may propose alternative ways to build a confidence set of other shapes using the samples of $y^*$ generated by our MCMC algorithm. One option is to approximate the contours of $[HX_A^+y^*\mid \calA(y^*)=A]$ (cf. \eqref{eq:DistJointqhatC}) to build a confidence set, in the similar spirit of a highest posterior density region. We leave this appealing possibility to future work.

\subsection{A case study} \label{sec:caseStudy}

Both our method and Lee's method are based on the truncated Gaussian distribution of $y$ given $\calA(y)=A$, but for some data Lee's intervals turned out to be much wider in the above comparisons. To clarify the key differences between the two methods at a conceptual level, we took a closer look at one dataset from the simulation setting $(n,p,\Sigma)=(100,200,\text{T})$ in Table~\ref{tb:leecomparison1}, for which the lasso support included seven variables, i.e. $|A|=7$. Here, we focus on making inference about $(\nu_2,\nu_5)$, whose true value was $(0.533,0.001)$. The corresponding observed value $(\hat \nu_2,\hat \nu_5)=(\eta_2^\trans y, \eta_5^\trans y)=(0.694,0.148)$, where $\eta_j=(X_A^+)^\trans e_j$.
Our confidence intervals for the two parameters were $\xi_2(\wh C) = [0.266, 1.126]$ and $\xi_5(\wh C) = [-0.263, 0.150]$, respectively, while Lee's intervals $\wh I_2=[0.469,0.918]$ and $\wh I_5=[-11.398,0.413]$. While all four intervals cover the true parameters, $\wh I_5$ is extremely wide compared to $\xi_5(\wh C)$.

Let us walk through our procedure to construct $\xi_j(\wh C)$ in this example. Given an unconditional confidence set $
\wh C$, we first draw $\tdmu^{(i)}$, $i=1,\ldots,K$, uniformly over $\calU(\wh C)$, which are shown as the gray dots in Figure~\ref{fig:Leeillust}(a) after being projected to $\eta_2$ and $\eta_5$. For each $\tdmu^{(i)}$, we simulate a sample of $y^*$ from the conditional distribution $[y^*\mid \calA(y^*)=A]$, i.e. $\dnorm_n(\tdmu^{(i)},\sigma^2\bfI_n)$ truncated to a region $\calD\subset\R^n$, whose projection $(\eta_2^\trans \calD,\eta_5^\trans \calD)$ is illustrated by the solid-line polygon in Figure~\ref{fig:Leeillust}(a). (The exact polygon may differ slightly as we are just plotting an approximate one for easy illustration.) The histograms of the simulated $\eta_j^\trans y^*$ are shown as box-plots in Figure~\ref{fig:Leeillust}(b) and (c) for $j=2,5$. Each box-plot corresponds to the distribution of $\eta_j^\trans y^*$ given one $\tdmu^{(i)}$. We then construct confidence intervals or sets from the aggregation of these samples across all $i=1,\ldots,K$.

\begin{figure}[!t]
	\centering
		\includegraphics[width=1\textwidth]{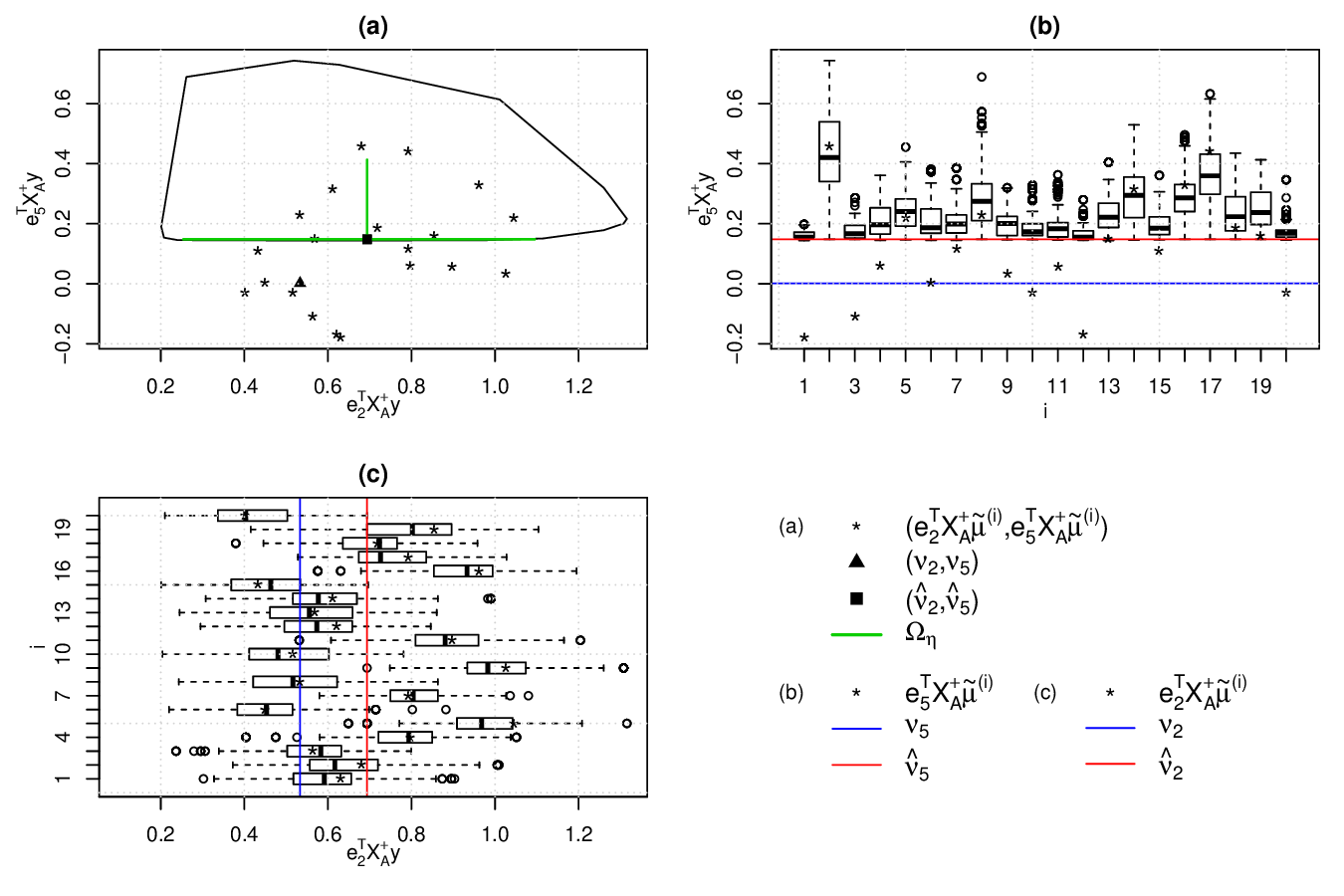}
\vspace*{-.2in}
	\caption{The difference between our and Lee's methods. (a) The feasible regions, (b) box-plots of $e_5^\trans X^+_Ay^*$, and (c) box-plots of $e_2^\trans X^+_Ay^*$.}
	\label{fig:Leeillust}
\end{figure}

To construct the interval $\wh I_j$ for $\nu_j$, \cite{lee:etal:16} decompose $y$ into $\eta_j^\trans y$ and its orthogonal component $z_{-j}\defi (\bfI_n-P_{\eta_j})y$, the residual after projecting to $\eta_j$. Their inference is then based on the conditional distribution 
\begin{align}\label{eq:leeTG}
\eta_j^\trans y\mid \{\calA(y)=A, z_{-j}\}\sim\mathcal{TN}(\nu_j, \sigma^2\|\eta_j\|^2, \Omega_\eta(j)),
\end{align}
where the truncation interval $\Omega_\eta(j)$ depends on the observed value $z^\text{o}_{-j}$ of $z_{-j}$. The green line segments in Figure~\ref{fig:Leeillust}(a) show $\Omega_\eta(j)$, the intersection between $\calD$ and the line $\{y\in\R^n: z_{-j}=z^{\text{o}}_{-j}\}$, which is a bounded one-dimensional interval for each $j$ in this example. They are much more restrictive than the feasible set of our samples $y^*$, projected to these two dimensions, shown as the solid-line polygon. This is the first key difference between the two methods. Then, to define $\wh I_j$, Lee's method finds all possible values of $\nu_j\in\R$ that make the observed statistic $\hat\nu_j$ within $(\alp/2,1-\alp/2)$-quantiles of \eqref{eq:leeTG}. When $\hat\nu_j$ is close to the boundary of $\Omega_\eta(j)$, which is the case for $\hat\nu_5$, their method tends to generate very wide intervals, such as $\wh I_5=[-11.398,0.413]$. Our method gets around this issue with truncated Gaussian inference by considering a smaller range of $\nu_j$ represented by the samples of $\eta_j^\trans \tdmu$, i.e. the star dots ($\star$) in panels (b) and (c). This is the second key difference, and it greatly shortens the constructed intervals, e.g. $\xi_5(\wh C) = [-0.263, 0.150]$. See Figure~\ref{fig:Leeillust}(b) for illustration. On the other hand, when $\hat \nu_j$ is far away from the boundaries of $\Omega_\eta(j)$, as for $j=2$ in this example, Lee's method builds efficient intervals, while ours can be slightly wider due to the additional randomness in $\tdmu$ (Figure~\ref{fig:Leeillust}c).

\section{Discussion}
\label{sec:conc}

We have proposed a new method for post-selection inference, based on estimator augmentation and a conditional MCMC sampler. 
Estimator augmentation is applied to derive a closed-form density for the conditional distribution $[\hbeta_A, S_I \mid \calA(y)=A]$, which is then used as the target distribution in our MCMC sampler. We randomize the estimate of the mean $\mu_0$ by uniform sampling over a confidence set, which incorporates the uncertainty in using a plug-in estimate of $\mu_0$ in our sampling procedure. We have shown with numerical comparisons that our method constructs much shorter confidence intervals than Lee's method \citep{lee:etal:16}, while achieving a comparable coverage rate. Moreover, unlike their method, our method never produces any infinite confidence intervals. With its great flexibility, we further demonstrated that our method can perform joint inference by constructing confidence sets for any set of parameters of interest after lasso selection, which is a unique contribution of this work.

While we have focused on the lasso active set in this work, conditioning on more general events is possible under our framework for post-selection inference. Recall that we parameterize the augmented estimator $(\hbeta,S)$ by the triplet $(\hbeta_\calA,S_\calI,\calA)$. Suppose the selected model is defined by the event $F(\hbeta_\calA,S_\calI,\calA)\in \calE$, where $F$ is a mapping. 
Similar to Corollary~\ref{cor:conden}, we may obtain the density for the conditional distribution
of the augmented estimator given the event $F(\hbeta_\calA,S_\calI,\calA)\in \calE$ based on Theorem~\ref{thm:lasso}. Let $I_\calE(v)$ be the indicator function for $\{v\in \calE\}$.

\begin{corollary}\label{cor:condgeneral}
Under the same assumptions of Theorem~\ref{thm:lasso}, the conditional distribution $[\hbeta_\calA, S_\calI,\calA \mid F(\hbeta_\calA,S_\calI,\calA)\in \calE]$
is given by 
\begin{eqnarray}\label{eq:condgeneral}
\Prob_{\Theta,\calA\mid F\in\calE}(d\theta, A) \propto  f_R\big(V_R^\trans H(\theta, A ; \mu_0, \lambda); \sigma^2 \big)
|\det T(A; \lambda)| I_{\calE}(F(\theta,A))  d\theta,
\end{eqnarray}
where $(\theta,A)=(b_A,s_I,A)$ satisfying the constraints in \eqref{eq:constraint1} and \eqref{eq:constraint2}.
\end{corollary}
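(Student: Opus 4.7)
The plan is to recognize that Corollary~\ref{cor:condgeneral} follows almost immediately from Theorem~\ref{thm:lasso} by applying the definition of a conditional density. Since Theorem~\ref{thm:lasso} already supplies the joint density of $(\Theta,\calA)$ with respect to the differential form $d\theta$, the only work left is to restrict this density to the set $\{(\theta,A):F(\theta,A)\in\calE\}$ and renormalize.

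Concretely, I would proceed as follows. First, invoke Theorem~\ref{thm:lasso} to write the unconditional distribution of $(\Theta,\calA)$ as
\begin{equation*}
\Prob_{\Theta,\calA}(d\theta,A) = f_R\!\big(V_R^\trans H(\theta,A;\mu_0,\lambda);\sigma^2\big)\,|\det T(A;\lambda)|\,d\theta
\end{equation*}
on the region defined by \eqref{eq:constraint1}--\eqref{eq:constraint2}. Second, for any measurable rectangle $G\times\{A\}$ in the $(\theta,A)$ space, use the standard definition of conditional probability,
\begin{equation*}
\Prob\!\left((\Theta,\calA)\in G\times\{A\}\,\big|\, F(\Theta,\calA)\in\calE\right)
= \frac{\Prob\!\left((\Theta,\calA)\in G\times\{A\},\,F(\Theta,\calA)\in\calE\right)}{\Prob\!\left(F(\Theta,\calA)\in\calE\right)},
\end{equation*}
which is well-defined whenever $\Prob(F\in\calE)>0$. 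Third, rewrite the numerator as an integral of the joint density in Theorem~\ref{thm:lasso} against the indicator $I_\calE(F(\theta,A))$, since the event $\{F(\Theta,\calA)\in\calE\}$ is captured exactly by that indicator on the support of the joint law. This yields \eqref{eq:condgeneral} with proportionality constant $1/\Prob(F\in\calE)$.

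There is no substantive obstacle here; the only points requiring mild care are measurability of $F^{-1}(\calE)$ (which one assumes implicitly so that the conditioning event makes sense) and non-triviality of the normalizing constant $\Prob(F\in\calE)>0$, without which the conditional density is undefined. No new analytic machinery is needed beyond Theorem~\ref{thm:lasso}; in particular, the parameterization $(b_A,s_F)$ and the constraint set \eqref{eq:constraint1}--\eqref{eq:constraint2} are inherited directly from the unconditional representation, so no re-derivation of the change-of-variables Jacobian $|\det T(A;\lambda)|$ is required. The result therefore reduces to restriction-and-renormalization of an already-established density.
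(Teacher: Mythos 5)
Your proposal is correct and coincides with the paper's (implicit) derivation: the paper offers no separate proof, treating the corollary as an immediate restriction-and-renormalization of the joint density in Theorem~\ref{thm:lasso}, exactly as you do. Your handling of the rectangles $G\times\{A\}$ correctly accounts for the fact that, unlike Corollary~\ref{cor:conden}, the active set varies here, so the Jacobian factor $|\det T(A;\lambda)|$ must be retained rather than absorbed into the proportionality constant.
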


Compared to the joint distribution \eqref{eq:joint}, the only difference is the inclusion of the indicator function, which essentially imposes
more constraints on $(\theta,A)$ in addition to \eqref{eq:constraint1} and \eqref{eq:constraint2}. A key step in the development of Monte Carlo algorithms for the above more general conditional distribution is to design efficient proposals that move $(\theta,A)$
in its feasible region satisfying all the imposed constraints. This is a challenging and interesting future direction.
As a concrete example, 
suppose we select variables by thresholding lasso $\hbeta_j$ for $j\in\N_p$,
that is, the selected model is $M = \{j : |\hbeta_j| \geq \tau\}$. 
Then the conditioning event in this example can be written as 
\begin{align*}
\left\{|\hbeta_j| \geq \tau, \forall\, j \in M\right\} \bigcap  \left\{|\hbeta_j| < \tau, \forall\, j\notin M\right\}.
\end{align*}
Note that the active set $\calA$ is no longer fixed on this event, although it must satisfy $\calA\supset M$. As a result, the target distribution \eqref{eq:condgeneral} is a joint distribution for $\Theta$ and $\calA$, which will incur additional computational cost to our MH sampler. In particular, the dimension of $\Theta$ will change when the size of $\calA$ changes.
A normal distribution centered at $b_j^{(t)}$ and truncated to $(-\infty, -\tau]\cap[\tau, \infty)$ can be used as a proposal for $b_j^\dagger$, $j\in M$ at the $t$-th iteration. For ${j \notin M}$, we need to consider some of the active $b_j^{(t)}\in(-\tau,\tau)$ turning into zero and vice versa, which would change the active set $\calA$.

Post-selection inference on data with group structure using estimator augmentation for group lasso can be another interesting future topic. Due to the complex sample space of the augmented group lasso estimator \citep{zhou:min:17b}, developing an MCMC sampler is more complicated than the one for lasso. However, this potential generalization will be important for applications with pre-grouped variables, categorical variables, or highly correlated predictors.

\section{Proofs} \label{sec:proof}  
\begin{proof}[Proof of Proposition~\ref{lm:converage}]
From \eqref{eq:defupmax} and \eqref{eq:deflowmax}, we have 
$\xi_j(\mu_0, \alpha/2)\subset\xi_j^*(\wh C)$ if $\mu_0\in \wh C$. Then,
\begin{eqnarray*}
\Prob\left\{\nu_j\in  \xi_j^*(\wh C) \bigg| \calA(y) = A \right\} 
&\geq& \Prob\left\{\nu_j\in  \xi_j^*(\wh C) \bigg| \calA(y) = A, \mu_0 \in \wh C \right\} \Prob\big(\mu_0 \in \wh C \mid \calA(y) = A\big)\\ 
&\geq& \Prob\left\{\nu_j\in  \xi_j(\mu_0, \alpha/2) \bigg| \calA(y) = A, \mu_0 \in \wh C\right\} \Prob\big(\mu_0 \in \wh C \mid \calA(y) = A\big).
\end{eqnarray*}
Due to the independence between $\wh C$ and $y$,
\begin{align*}
\Prob\left\{\nu_j\in  \xi_j(\mu_0, \alpha/2) \bigg| \calA(y) = A, \mu_0 \in \wh C\right\} = 
\Prob\left\{\nu_j\in  \xi_j(\mu_0, \alpha/2) \bigg| \calA(y) = A\right\} = 1-\alpha/2,
\end{align*}
and $\Prob\big(\mu_0 \in \wh C \mid \calA(y) = A\big) = \Prob(\mu_0 \in \wh C) = 1 - \alpha / 2,$ which imply \eqref{eq:conservativepr}.
\end{proof}

\begin{proof}[Proof of Theorem~\ref{thm:condsample}]
Under the assumptions of Theorem~\ref{thm:lasso}, for every $y\in\R^n$ there is a unique $(\hbeta,S)$ (Lemma 1 in \cite{zhou:14}). Therefore, the KKT condition~\eqref{eq:lassoKKT} establishes a bijection between $y$
and the augmented estimator $(\hbeta,S)$. Consequently, $y$ can be uniquely represented by
\begin{align}
 y&= (X^\trans)^+(X^{\trans} X \hbeta + n\lambda W S) \nonumber\\
 &=X_A \hbeta_A + n\lambda (X^\trans)^+ \big\{ W_A \sgn(\hbeta_A) + W_I S_I \big\}, \label{eq:yrepr}
\end{align}
where $(X^\trans)^+ = (XX^\trans)^{-1}X$ because $X$ has full row rank.
Since $|A|\leq n$ (Remark \ref{rm:general}) and every $n$ columns of $X$ are linearly independent, we have 
$X_A^+=(X_A^\trans X_A)^{-1}X_A^\trans$.
From \eqref{eq:lassoKKT}, 
\begin{equation*}
X_A^\trans y = X_A^\trans X_A \hbeta_A + n\lambda W_{AA}\sgn(\hbeta_A).\\ 
\end{equation*}
Multiplying both sides by $(X_A^\trans X_A)^{-1}$, we get
\begin{equation}\label{eq:XAyrepr}
X_A^+ y = \hbeta_A + n\lambda (X_A^\trans X_A)^{-1} W_{AA}\sgn(\hbeta_A).\\ 
\end{equation}
Then the conclusions~\eqref{eq:represent} and \eqref{eq:XAy} follow from \eqref{eq:yrepr} and \eqref{eq:XAyrepr}
due to the assumption that $[\hbeta^*_A,S^*_I]=[\hbeta_A,S_I\mid \calA =A]$.
\end{proof}

\bibliographystyle{asa}
\bibliography{SparseInference} 

\end{document}